\documentclass[journal,10pt]{IEEEtran}

\usepackage{amsthm,amsmath,amssymb,amsfonts}

\usepackage{dsfont}
\usepackage{bm}
\usepackage{booktabs}
\usepackage{siunitx}

\usepackage{pgfplots}
\usepackage{pdfpages}
\usepackage{color}

\usepackage{subcaption}
\usepackage{changes}

\usepackage{multirow}

\usepackage{paralist}

\newcommand{\I}{\mathds{1}}
\newcommand{\R}{\mathbb{R}}

\newtheorem{theorem}{Theorem}[section]

\newtheorem{prop}[theorem]{Proposition}

\DeclareMathOperator*{\argmin}{arg\,min}
\DeclareMathOperator*{\argg}{arg}

\usepackage[authoryear]{natbib}

\usepackage{authblk}

\graphicspath{{images/}}

\begin{document}
\title{Short term wind turbine power output prediction}

\author{S.~Kolumb\'an\IEEEauthorrefmark{2}, S.~Kapodistria\IEEEauthorrefmark{1}, and N.~Nooraee\IEEEauthorrefmark{3}
\IEEEcompsocitemizethanks{
\IEEEauthorrefmark{1} Department of Mathematics and Computer Science, Eindhoven University of Technology, P.O. Box 513, 5600 MB Eindhoven, the Netherlands. Email: s.kapodistria@tue.nl,\\
\IEEEauthorrefmark{2} Department of Mathematics and Computer Science of the Hungarian Line, Babeş-Bolyai University,\\
\IEEEauthorrefmark{3} Assoc. Principal Scientist at MSD}
}

\maketitle


\begin{abstract}
In the wind energy industry, it is of great importance to develop models that accurately forecast the power output of a wind turbine, as such predictions are used for wind farm location assessment or power pricing and bidding, monitoring, and preventive maintenance. As a first step, and following the guidelines of the existing literature, we use the {\em supervisory control and data acquisition} (SCADA) data to model the {\em wind turbine power curve} (WTPC). We explore various parametric and non-parametric approaches for the modeling of the WTPC, such as parametric logistic functions, and   non-parametric  piecewise linear, polynomial, or  cubic spline interpolation functions. We demonstrate that all aforementioned classes of models are rich enough (with respect to their relative complexity) to accurately model the WTPC, as their mean squared error (MSE) is close to the MSE lower bound calculated from the historical data. However, all aforementioned models, when it comes to forecasting, seem to have an intrinsic limitation, due to their inability to capture the inherent auto-correlation of the data. To avoid this conundrum, we  show that adding a properly scaled ARMA modeling layer increases short-term prediction performance, while keeping the long-term prediction capability of the model. We further enhance the accuracy of our proposed model, by incorporating additional environmental factors that affect the power output, such as the ambient temperature, and the wind direction.
\end{abstract}

\begin{IEEEkeywords}
Wind turbine power curve modeling; parametric and non-parametric modeling techniques; probabilistic forecasting; SCADA data
\end{IEEEkeywords}

\section{Introduction}
\IEEEPARstart{W}{ind}  turbine power curves (WTPC) are used for the modeling of the power output of a single wind turbine. Such models are needed in 
\begin{inparaenum}[i)]
\item  Wind power pricing and bidding: Electricity is a commodity which is traded similarly to stocks and swaps, and its pricing incorporates principles from supply and demand.
\item Wind energy assessment and prediction: Wind resource assessment is the process by which wind farm developers estimate the future energy production of a wind farm.
\item  Choosing a wind turbine: WTPC models aid the wind farm developers to choose the generators of their choice, which would provide optimum efficiency and improved performance.
\item Monitoring a wind turbine and for preventive maintenance: A WTPC model can serve as a very effective performance monitoring tool, as several failure modes can result in power generation  outside the specifications. As soon as an imminent failure is identified, preventive maintenance (age or condition based) can be implemented, which will reduce costs and increase the availability of the asset.
\item Warranty formulations: Power curve warranties are often included in contracts, to insure that the wind turbine performs according to specifications. Furthermore, service providers offer warranty and verification testing services of whether a turbine delivers its specified output and reaches the warranted power curve, while meeting respective grid code requirements.
\end{inparaenum}
See, e.g.,  \citep{Widen2015,Shi2011,Lydia2013} and the references therein for the aforementioned applications. Thus it is pivotal to construct accurate WTPC models. However, this is a difficult problem, as the output power of a wind turbine varies significantly with wind speed and every wind turbine has a very unique power performance curve,  \citep{Manwell2010}.

In this paper, we explore the literature on how to create an accurate WTPC model based on a real dataset and suggest practical and scientific improvements on the model construction. We initially construct a {\em static} model  (in which the regressor(s) and the regressand(s) are considered to be independent identically distributed (i.i.d.) random variables) model for the WTPC and demonstrate how various parametric and non-parametric approaches are performing from both a theoretical perspective, and also with regard to the data. In particular, we explore parametric logistic functions, and the  non-parametric  piecewise linear interpolation technique, the polynomial interpolation technique, and the cubic spline interpolation technique. We demonstrate that all aforementioned classes of models are rich enough to accurately model the WTPC, as their mean squared error (MSE) is close to a theoretical MSE lower bound. Within each non-parametric model class, we select the best model by rewarding MSEs close to the theoretical bound, while simultaneously penalizing for overly complicated models (i.e., models with many unknown parameters), using the Bayesian information criterion (BIC), see \citep{SchwarzBIC}. We demonstrate that such a static model, even after incorporating information on the wind speed and the available environmental factors, such as wind angle and ambient temperature, does not fully capture all available information. To this end, we propose in this paper, to enhance the static model with a {\em dynamic} layer (in which the regressor(s) is considered to be inter-correlated e.g., time series or stochastic processes), based on an autoregressive-moving-average (ARMA) modeling layer.

\subsection{Contribution of the paper}
In this paper, based on a real dataset, we explore a hybrid approach for the wind turbine power output modeling consisting of the static model plus the dynamic layer. This approach: 
\begin{inparaenum}[i)]
\item provides a very accurate modeling approach; 
\item is very useful for accurate short and long-term predictions; 
\item indicates that, within the cut-in wind speed  (\SI[per-mode=symbol]{3.5}{\meter\per\second}) and the rated output wind speed (\SI[per-mode=symbol]{15}{\meter\per\second}), the conditional distribution of the power output is Gaussian. 
\end{inparaenum}
We consider that points i)-ii) mentioned above will contribute directly to the practice, as the accurate modeling and forecasting capabilities are of utter importance. We follow the straightforward and industrially accepted approach of first estimating the WTPC, then extending this model with additional modeling layers. Although, there are various ways of modeling the WTPC, we argue, that above a given level of modeling flexibility, the exact choice is not important. The presented approach relies on available predictions of the environmental conditions. Furthermore, point \emph{iii)} mentioned above will greatly benefit the literature, as it is the first stepping stone towards proving that random power injections from wind energy in the electric grid can be accurately modeled using a Gaussian framework, see \citep{bert1,bert2}.
All in all, in this paper, we provide a new dataset collected from a wind turbine, and use it to show how to accurately model and forecast power output. Since the features of the used SCADA data are common among other types of turbines, the conclusions of the paper remain valid for other turbines as well. The analysis presented in this paper is scientifically and practically relevant, and contributes substantially from both the modeling and forecasting aspect, while providing a thorough overview of sound statistical methods. All results presented in the paper are motivated scientifically (when appropriate and possible) and are supported by real data.

\subsection{Paper outline}
In Section \ref{sec:Data}, we describe the raw data and provide all information on how the data was cleaned. In Section \ref{sec:power_curve_modeling}, we treat the WTPC modeling: First, in Section \ref{sec:mod_classes}, we present a simple static WTPC modeling approach, which models the power output as a function of the wind speed, using both parametric and non-parametric approaches; parametric logistic models (Section~\ref{sec:Logistic}), non-parametric piecewise linear models (Section~\ref{sec:PiecewiseLinear}), polynomial models (Section~\ref{sec:Polynomial}), and spline models (Section~\ref{sec:Spline}).
In Section \ref{sec:ComparisonOfModelClasses}, we compare the various modeling classes and determine criteria for model selection. In Section \ref{sec:MorePhysicalParamaters}, we enhance the static model at hand by incorporating additional factors, such as the wind angle and the ambient temperature. Analyzing the residuals of the enhanced static model, we are motivated to introduce a dynamic Gaussian layer in our model, cf. Section \ref{sec:TimeSeriesPrediction}, which produces very accurate short-term predictions, cf. Section \ref{sec:forecasting_confidence}. We conclude the paper with some remarks in Section \ref{sec:conclusions}.

\begin{figure}[h!]
\centering
\includegraphics[scale=0.5]{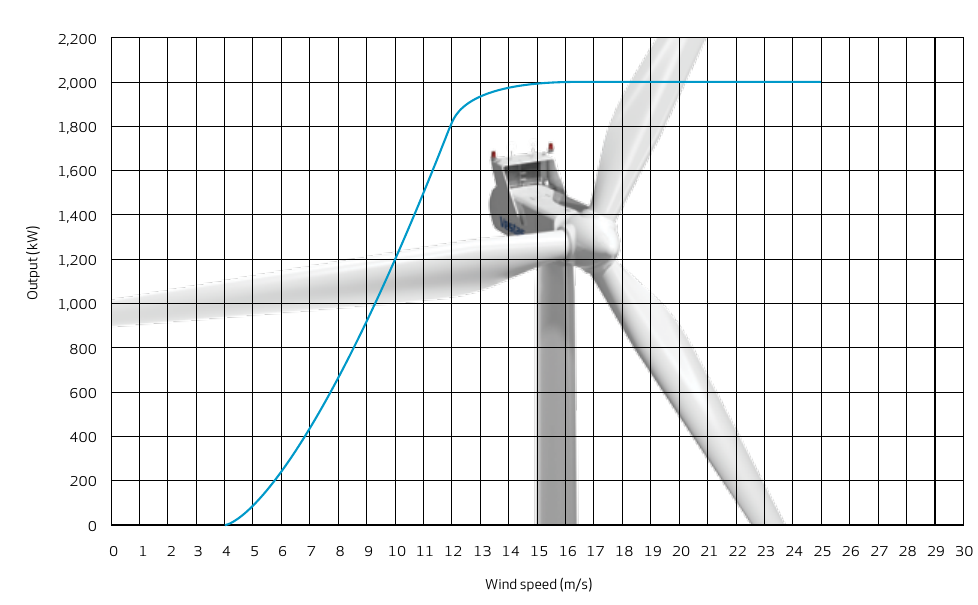}
\caption{WTPC of the  V80-2.0MW (picture from \cite{Vestas})}
\label{img:WTPC-V80}
\end{figure}

\section{Data}
\label{sec:Data}
The goal of this section is to describe the features of the data used in this study. The data was obtained by the supervisory control and data acquisition (SCADA) system of a wind turbine operator in the Netherlands.
The data was collected from an off-shore Vestas V80-2.0MW wind turbine, with a rated capacity of \SI[per-mode=symbol]{2}{\mega\watt}. Vestas V80-2.0MW joins the grid connection at a wind speed of \SI[per-mode=symbol]{4}{\meter\per\second}, has a rated actual power output of \SI[per-mode=symbol]{2}{\mega\watt} (typically achieved) at a wind speed of \SI[per-mode=symbol]{16}{\meter\per\second}, and it is disconnected at a wind speed of \SI[per-mode=symbol]{25}{\meter\per\second}. See
Figure \ref{img:WTPC-V80} for  a depiction of the theoretical WTPC. These are suggested values offered by the manufacturer, but they might change due to wear of the turbine or due to  installation or geographic circumstances.\\
The data used for the analysis presented in this paper spans across two years and the dataset contains recordings of the environmental conditions, as well as the physical state, and power output of the turbine.\\
There are two important features of SCADA datasets, which are not specific to the data of our study but are common amongst SCADA datasets recorded throughout the wind industry. One of these is the \SI{10}{\minute} reported frequency of the SCADA observations; although the signals of interest are collected at a relatively high frequency, only processed observations calculated on a \SI{10}{\minute} window are recorded in the SCADA databases. These processed signals contain the average, maximum, minimum and standard deviation of the wind speed, and the power output amongst other quantities of interest.
The second important feature is that the data are strongly quantized due to the rounding of the reported number.   As a result, the observations are recorded up to one decimal digit.
Some of our finding are consequences of these two properties which correspond to the quasi industry standard. Because of this, we expect that our results are also applicable to similar data coming from other wind turbine operators or wind turbine service providers.

\subsection{Description of the raw dataset}
All graphs and figures were produced using two seasonal parts of the available dataset. Throughout the paper, we  refer to the data recorded between June 1, 2013, and August 31, 2013, as the training data, and the corresponding period of year 2014 as the validation data. Although, we have access to the full two year data set, we choose to restrict our analysis in a specific season of the year, as this reduces seasonality effects, while still maintaining a significant amount of data, and it permits a full decoupling between the training and the validation data. It is important to note that the results presented in the paper still hold when we perform the same analysis using the full year 2013 as training data and the data from 2014 for validation. \\
The dataset contains observations of various signals every 10 minutes. Some of the signals contained in the dataset are the ambient wind speed, say $w_t$, the relative direction of the wind speed with respect to the nacelle, say $\phi_t$, the ambient temperature, say $T_t$, and the power output produced by the turbine, say $p_t$, at time $t$, $t\geq0$. 
Besides the aforementioned continuous valued signals, there are some nominal variables with a discrete support, such as the variable pertaining to the different operational states of the turbine. Such variables help to identify time periods during which the turbine is out of use (maintenance, free run, blades turned into low resistance position) or if the wind turbine is in a state different from normal operational condition.

In the first part of the paper, we suppress the subscript $t$ as we deal with static models, while in the second part of the paper we deal with dynamic models, and we, therefore, reinstate the subscript $t$ notation.

\begin{table}[b!]
\caption{Summary report of the data cleaning}
\label{tab:DataCleaningResults}
\begin{center}
\begin{tabular}{  l  c c  }
 \hline
 \bf{\parbox{5.5cm}{Year}} & \bf{2013} & \bf{2014} \\
 \hline
{\parbox{5.5cm}{Total number of observations}} & 13248 & 13248 \\
{\parbox{5.5cm}{Number of NAs, IN, \& NNO observations}}& 255  & 445 \\
{\parbox{5.5cm}{Number of outliers}} & 144 & 165 \\
{\parbox{5.5cm}{Total number of observations after cleaning }}& 12849 & 12638 \\
{\parbox{5.5cm}{Percentage of observations kept after cleaning}} & 97 \% & 95.4 \% \\
 \hline
\end{tabular}
\end{center}
\end{table}

\subsection{Cleaning the data}
The quality of the available SCADA data is extremely good, nevertheless it requires some pre-processing before creating the forecasting models. We list below the cleaning rules implemented in this study, according to which we disregard observations:
\begin{enumerate}
\item Missing entries (NAs): there are a few timestamps that are completely missing from the 10 minute sampling sequence.
\item Incomplete entries (IN): if one or more of the signal values, e.g., the power output, the wind speed, etc., are missing from a data record, then the full record corresponding to this time stamp is discarded.
\item Not normal operation (NNO): based on the value of the state variables we can disregard states that do not correspond to normal operational conditions, e.g. free rotation of the wind turbine without connection to the grid, derated operation, etc.
\item Outliers: Firstly, all observations of wind power corresponding to the same wind speed are grouped together and the corresponding box plot is generated. Then, for every given wind speed value, all points with power generation  outside the whiskers of the box plot (i.e., all observations falling outside the interval $(Q_1-3\text{IQR},Q_3+3\text{IQR})$) are discarded.
\end{enumerate}

Table~\ref{tab:DataCleaningResults} contains the summary report of the data cleaning procedure. It shows that approximately 5\% of the original data is discarded, still leaving a trove of data to be used for estimation purposes. The scatter plot of the power output, $p$, against the wind speed, $w$, is shown in Figure~\ref{fig:2013_full-clean-overlayFullClean}.
In this figure, we have color-characterized the training data by depicting in red the raw data, and in blue the cleaned dataset used for the analysis.

\section{Power curve modeling and its limitations}
\label{sec:power_curve_modeling}
When it comes to pricing wind power, assessing the possible location for a wind turbine installation or to forecasting short-term (expected) power generation  for supply purposes, the main tool proposed in the literature is the WTPC. Such a curve is used to  describe the relationship between the steady wind speed and the produced power output of the turbine. The shape of the WTPC for the type of wind turbines of interest to this study is depicted in Figure \ref{img:WTPC-V80}, while the fitted curve based on the cleaned data is depicted in Figure~\ref{fig:2013_full-clean-overlayCleanWTPC}.

\begin{figure}[h!]
	\centering
	\includegraphics{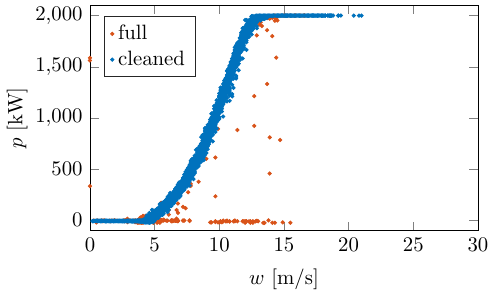}	
	\caption{The power output, $p$, against the wind speed, $w$, in the raw (red) and the cleaned (blue) dataset from 2013}
	\label{fig:2013_full-clean-overlayFullClean}
\end{figure}
\begin{figure}[h!]
	\centering
	\includegraphics{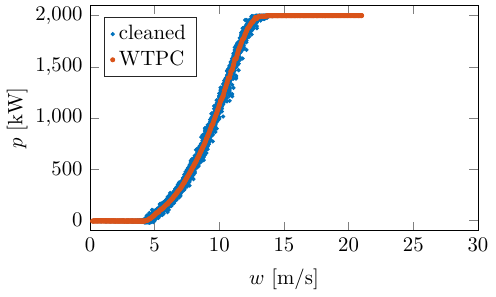}
	\caption{The power output $p$ against wind speed $w$ for the cleaned training dataset together with an estimated WTPC}
	\label{fig:2013_full-clean-overlayCleanWTPC}
\end{figure}

The WTPC, in ideal (laboratory) conditions, is given by the International Standard or the manufacturer, cf. \citep{IEC,Vestas}, but such curves can change over time due to environmental changes or due to component wear. This makes it paramount to estimate the power curve for each turbine individually, so these tailor-made WTPCs may be used for power generation  forecasting, decision under uncertainty, and monitoring.

The literature dealing with the topic of WTPC modeling techniques is extensive and covers many fields, see, e..g., \citep{Li2001,Lydia2014452,CarrilloObando2013,Lydia2013,Sohoni2016} and the references therein. The majority of this work is  focused on obtaining the best parametric or non-parametric model for the power curve of a turbine based on the available data. To this purpose, different approaches are compared using various criteria.  However, as we point out in this paper by comparing the various  model classes proposed in the literature, obtaining a parametric or non-parametric estimate of the WTPC is of limited value. 
Although Figure~\ref{fig:2013_full-clean-overlayCleanWTPC} indicates an ``appropriate'' fitted model to the power output using the wind speed, there are apparent remaining residuals that are not explained by the fitted power curve.  Investigating the statistical properties of the residuals reveals features that should be taken into account in the modeling.
The remaining residuals can be explained by the fact that it might be needed to use additional regressors (besides the wind speed) to explain the power output, and can be attributed to the fact that the homoscedasticity assumption is not valid, i.e. the variance of the residuals is not constant and the residuals are highly correlated. 
These two issues can be partly overcome by considering machine learning and artificial intelligence approaches, see, e..g., \citep{Peletier2016,Lazaro2022,Tumsea2022} and the references therein. However, the drawback in these approaches is that the models are not interpretable, that they require a trove of data for their training, and that they cannot be easily transferred to model the WTPC of another wind turbine. 
For these reasons, we strongly believe that, contrary to the existing literature,  the focus should shift to obtaining models that are interpretable, that can be easily extended to various regressors, and that can capture the heteroscedastic nature of the data. Such models  should not only be ranked according to the regular modeling power, but also according to their computational complexity and their numerical robustness.

\subsection{Power curve modeling classes}
\label{sec:mod_classes}

In this section, we present and compare various model classes proposed in the literature. Firstly, we introduce some notation in Section~\ref{sec:ModelingAndEstimation} that allows us to describe in a uniform manner the  models belonging to different model classes.
Thereafter, we describe how to calculate the estimates for each model class. For all model classes, we assume that the value of the power curve is constant below \SI[per-mode=symbol]{3.5}{\meter\per\second} taking the estimated power output value at \SI[per-mode=symbol]{3.5}{\meter\per\second}. Similarly, in the wind speed range between \SI[per-mode=symbol]{15}{\meter\per\second} and \SI[per-mode=symbol]{25}{\meter\per\second}, the power output curve is constant taking the estimated power output value at  \SI[per-mode=symbol]{15}{\meter\per\second}. Above the cut-out speed \SI[per-mode=symbol]{25}{\meter\per\second}, the power output is set to zero as the turbine should not operate. Thus, this part of the curve is not estimated, as in addition the cleaned dataset does not contain any observations in this range. These limitations need to be incorporated into the estimation procedure of the specific models, the details of which are presented in Section~\ref{sec:IncorporatingBounds}.
We note that the chosen \SI[per-mode=symbol]{15}{\meter\per\second} as an upper bound, for the rated speed, is lower than the value suggested by the manufacturer (see Section \ref{sec:Data}), but the data validate our choice.
\\
We consider both parametric and non-parametric models and compare the various model classes using the mean squared error (MSE) value, while within a class (for the non-parametric models) we select a model taking into account the complexity associated with it; non-parametric model classes (e.g. polynomial or spline models) have a nested structure, where the nesting levels correspond to complexity levels within the class (e.g. degree of polynomial or knot points of splines). The selection procedure  of a model within a class  is presented in Section~\ref{sec:ModelSelection}.

\subsubsection{Modeling and least squares estimation}
\label{sec:ModelingAndEstimation}

A power curve is a functional relation between the wind speed, $w$, and the power generation,  $p$. We define this functional relation as
\begin{equation}
p = \mathcal{M}_{\bm{\theta}}(w),
\end{equation}
where $\mathcal{M}_{\bm{\theta}}(\cdot)$ is a function belonging to the model class parametrized by a vector of parameters $\bm{\theta} \in \R^{n_{\bm{\theta}}}$, with $n_{\bm\theta}$ the dimension of the parameter vector depending on the model class $\mathcal{M}$. In the next sections, we consider various parametric and non-parametric model classes: logistic models $\mathcal{G}_{\bm\theta}(\cdot)$ in Section~\ref{sec:Logistic}, piecewise linear models $\mathcal{L}_{\bm\theta}(\cdot)$ in Section~\ref{sec:PiecewiseLinear}, polynomial models $\mathcal{P}_{\bm\theta}(\cdot)$ in Section~\ref{sec:Polynomial}, and spline models $\mathcal{S}_{\bm\theta}(\cdot)$ in Section~\ref{sec:Spline}.

Given a dataset containing a series of wind speed and power output pairs, $(w_k, p_k)_{1 \leq k \leq N}$ with $N$ the total number of observations, we define the least squares estimate within a model class as
\begin{equation}
\hat{\bm\theta} = \argmin_{\bm \theta} \frac{1}{N} \sum_{k=1}^N (p_k - \mathcal{M}_{\bm \theta}(w_k))^2.
\end{equation}
In order to shorten notation, the power output given by the least-squares estimated model at a given time is going to be denoted as $\hat{p} = \mathcal{M}_{\hat{\bm \theta}}(w)$, where the model class $\mathcal{M}$ is always going to be clear from the context.

One important conclusion of the paper is that WTPC modeling has significant limitations. In order to show this, we introduce some elementary facts about least squares estimates related to quantized data (as the SCADA data are quantized to one decimal digit, as described in Section~\ref{sec:Data}).

\begin{prop}[Lower bound for MSE]
\label{stat:MSELowerBound}
Irrespective of the model structure that is used to fit a model to the training data, if the training data are quantized in the regressor then there is a minimal attainable MSE and that can be calculated based on the data.

Let the samples be $\left(x_k, y_k \right)_{1 \leq k \leq N}$ and consider a model ${y}=\mathcal{M}_{{\theta}}(x)$, with  least squares estimate
$$
\hat{\theta} = \argmin_{\theta} \frac{1}{N} \sum_{k=1}^N (y_k - \mathcal{M}_\theta(x_k))^2.
$$
Let $\mathcal{X}$ be the set of all appearing values of $x$, i.e. $\mathcal{X} = \bigcup_{k=1}^N \{ x_k \}$, then the minimal attainable MSE value can be calculated as
\begin{equation}
\label{eq:MSELowerBound}
\min_{\theta} \frac{1}{N} \sum_{k=1}^N (y_k-\mathcal{M}_\theta(x_k))^2 \geq \frac{1}{N} \sum_{x \in \mathcal{X}}\sum_{k=1}^N \I_{\{x_k = x\}} \left(y_k  - \bar{y}_x \right)^2,
\end{equation}
with
\begin{equation}
\label{eq:bar_y_x}
\bar{y}_x=\frac{\sum_{k=1}^N \I_{\{x_k = x\}} y_k}{\sum_{k=1}^N \I_{\{x_k = x\}}},
\end{equation}
and $\I_{\{\cdot\}}$ an indicator function taking value 1, if the event in the brackets is satisfied, and 0, otherwise.
\end{prop}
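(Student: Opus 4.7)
The plan is to exploit the only structural fact about $\mathcal{M}_\theta$ that matters here: it is a function of $x$ alone, so it assigns a single value to every $x \in \mathcal{X}$, regardless of how $\theta$ is chosen or how rich the model class is. The quantization of the regressor then forces many residuals to share the same predicted value, and within each such group the best constant predictor is the group mean.

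First I would rewrite the MSE by partitioning the index set $\{1,\dots,N\}$ according to the distinct values of $x$. Writing $c_x := \mathcal{M}_\theta(x)$ for $x \in \mathcal{X}$, this gives
\begin{equation*}
\frac{1}{N}\sum_{k=1}^N (y_k - \mathcal{M}_\theta(x_k))^2 = \frac{1}{N} \sum_{x \in \mathcal{X}} \sum_{k=1}^N \I_{\{x_k = x\}} (y_k - c_x)^2.
\end{equation*}
Next I would apply, separately for each $x \in \mathcal{X}$, the elementary fact that $c \mapsto \sum_{k} \I_{\{x_k = x\}} (y_k - c)^2$ is a convex quadratic in $c$ minimized at the weighted mean $c = \bar{y}_x$ defined by \eqref{eq:bar_y_x}. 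The minimum value is precisely the within-group sum of squares $\sum_k \I_{\{x_k = x\}} (y_k - \bar{y}_x)^2$. Summing these inequalities across $x \in \mathcal{X}$ and dividing by $N$ yields the bound \eqref{eq:MSELowerBound}.

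Finally, I would minimize the left-hand side over $\theta$: since the bound on the right-hand side is independent of $\theta$, the inequality is preserved, which gives the claimed inequality with $\min_\theta$ on the left. I would also remark that this lower bound is attained if and only if the model class is rich enough to interpolate the conditional means $\{\bar{y}_x : x \in \mathcal{X}\}$, which is generally not the case for the parametric/non-parametric classes $\mathcal{G}_{\bm\theta}$, $\mathcal{L}_{\bm\theta}$, $\mathcal{P}_{\bm\theta}$, $\mathcal{S}_{\bm\theta}$ introduced above.

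There is no real obstacle: the argument reduces to the one-dimensional identity that the sample mean minimizes sum of squared deviations, applied group-by-group. The only thing worth stating carefully is that the bound uses no property of $\mathcal{M}_\theta$ beyond being a single-valued function on $\mathcal{X}$, which is exactly what makes the bound universal across model classes and the reason it is a useful diagnostic for WTPC fitting.
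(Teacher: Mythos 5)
Your proposal is correct and follows essentially the same route as the paper's proof: both decompose the MSE into groups sharing the same regressor value and minimize each group's sum of squares at the group mean $\bar{y}_x$, the paper doing so by explicitly solving the first-order condition while you invoke the standard fact that a convex quadratic in $c$ is minimized at the weighted mean. Your closing remark on attainability matches the paper's observation that the bound is attained precisely when $\mathcal{M}_\theta(x)=\bar{y}_x$ for all $x\in\mathcal{X}$.
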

\begin{proof}
The MSE can be written as
$$
\frac{1}{N} \sum_{k=1}^N (y_k-\mathcal{M}_\theta(x_k))^2 = \frac{1}{N} \sum_{x \in \mathcal{X}}\sum_{k=1}^N \I_{\{x_k = x\}} \left(y_k  - \mathcal{M}_\theta(x) \right)^2.
$$
The right hand side of the above equation can be bounded by calculating lower bounds to each group of summands involving the same regressor value $x$. Given $x$, let $z_x = \mathcal{M}_\theta(x)$, then the corresponding group of summands can be written as
$$
S_x := \frac{1}{N}\sum_{k=1}^N \I_{\{x_k = x\}} \left(y_k  - z_x \right)^2.
$$
The derivative of $S_x$ with respect to $z_x$ is
\begin{align*}
\frac{\partial}{\partial z_x} S_x
&=
-\frac{2}{N}\sum_{k=1}^N \I_{\{x_k = x\}} (y_k-z_x) \\
&= 2 \frac{z_x}{N}\sum_{k=1}^N \I_{\{x_k = x\}} - \frac{2}{N}\sum_{k=1}^N \I_{\{x_k = x\}} y_k.
\end{align*}
Solving the optimality condition $\frac{\partial}{\partial z_x} S_x = 0$ for $z_x$ reveals that the minimum is obtained at \eqref{eq:bar_y_x}. Thus, the lower bound is attained if $\forall x \in \mathcal{X} : \mathcal{M}_\theta(x) = \bar{y}_x$.
\end{proof}

The lower bound given in \eqref{eq:MSELowerBound} is always true, but it is not necessarily a tight bound. If every regressor's value, $x$, appears only once in the data, then this bound would be 0, which is trivial for a sum of squares. The bound will give a nonzero value in the case of observations
with $|\mathcal{X}| < N$, where $|\mathcal{X}|$ denotes the cardinality of the set $\mathcal{X}$.
In our case, when considering $\mathcal{X}=\{3.5,3.6,\ldots,14.9,15\}$, with $|\mathcal{X}| = 116$ and $N=12849$, the bound is non-zero.

\subsubsection{Constrained model}
\label{sec:IncorporatingBounds}

In order to keep the notation and the calculations simple, without loss of generality, we estimate the corresponding  parameters and choose the best WTPC model (within a class), only for wind speeds in the range $[3.5, 15]$. To this end, we consider a slightly modified model: For a model $ \mathcal{M}_{\bm \theta}$ determined by the parameter vector ${\bm\theta}$, from the model class $\mathcal{M}$, we define the constrained model $\overline{\mathcal{M}}_{\bm\theta}$ as
\begin{equation}\label{Constrained}
\begin{split}
\overline{\mathcal{M}}_{\bm\theta}(w) = & \mathcal{M}_{\bm \theta}\big( 3.5 \cdot \I\{w < 3.5\}  + w\cdot \I\{3.5 \leq w < 15\} \\
	& + 15 \cdot \I\{15 \leq w < 25\} \big)  - \mathcal{M}_{\bm \theta} \big( 0 \big)\cdot \I\{25 \leq w \}.
\end{split}
\end{equation}
The argument of $\mathcal{M}_{\bm \theta}$ is constructed such that for wind values smaller than $3.5$ the model $\overline{\mathcal{M}}_{\bm \theta}$ will result in the same power output values as $\mathcal{M}_{\bm \theta}(3.5)$, for values $w\in[3.5,15)$ the model $\overline{\mathcal{M}}_{\bm \theta}$ results in the same power output as $\mathcal{M}_{\bm \theta}$, for values $w\in[15,25)$ the model $\overline{\mathcal{M}}_{\bm \theta}$ results in the same power output as $\mathcal{M}_{\bm \theta}(15)$, and for wind values above $25$ the predicted power output is zero. Considering the constrained model, we can estimate the parameters of the model as usual after a slight transformation of the training data: for all observations with $w< 3.5$ the value of $w$ is changed to $3.5$, for all observations with $w\in[15, 25)$ the value of $w$ is changed to $15$, and all observations with $w \geq 25$ are ignored. Then, this transformed dataset is used for the parameter estimation.

\subsubsection{Logistic models}
\label{sec:Logistic}

Logistic models have been widely used in growth curve analysis and their shape resembles that of a WTPC under the cut-out speed. For this reason, they were recently applied to model WTPCs, see \citep{KusiakZhengSong2009,Lydia2013}. \cite{Lydia2014452} present an overview of parametric and non-parametric models for the modeling of the WTPC, and state that the 5-parameter logistic (5-PL) function is superior in comparison to the other models under consideration. However, as we show in the sequel, this statement should be viewed with skepticism and perhaps should be interpreted as the result of a comparison only within models with the same number of parameters (parametric models) or same level of complexity (non-parametric models).

In this section, we apply a different formulation of the logistic model from the one used in \citep{Lydia2014452}, so as to improve fitness. The 5-PL model used in \citep{Lydia2013} is given as follows
\begin{equation}\label{5pl}
	p = \theta_{5} + \frac{\theta_{1}-\theta_{5}}{\left(1+\left(\frac{w}{\theta_{2}}\right)^{\theta_{3}}\right)^{\theta_{4}}}.
\end{equation}
In this model, parameters $\theta_{1}$ and $\theta_{5}$ are the asymptotic minimum and maximum, respectively, parameter $\theta_{2}$ is the inflection point, parameter $\theta_{3}$ is the slope and $\theta_{4}$ governs the non-symmetrical part of the curve. However, this type of 5-PL does not describe the asymmetry as a function of the curvature, see \citep{RickettsHead1999}.
As an alternative, \cite{Stukel1988} proposed a technique which can handle the curvature in the extreme regions. We apply this technique with a slight modification to fit a logistic model to the WTPC. The general form of the model is presented in Equation \eqref{Stukel}
\begin{figure*}[!t]
\normalsize
\begin{equation}\label{Stukel}
    p= \mathcal{G}_{\bm\theta}(w)=\theta_{1}+\frac{\theta_{4}-\theta_{1}}{1+exp\left[-\left\{ \theta_{2}\left(w-\theta_{3}\right)
    +\theta_{\ell}\left(w-\theta_{3}\right)^{2}\mathds{1}_{_{[3.5,\theta_{3})}}\left(w\right)
    +\theta_{u}\left(w-\theta_{3}\right)^{2}\mathds{1}_{_{\left[\theta_{3},15\right]}}\left(w\right)
      \right\} \right]},
\end{equation}
%
\hrulefill
\vspace*{4pt}
\end{figure*}
with $\mathds{1}_A(w)$ denoting the indicator function taking value 1 when $w\in A$, for a given set $A$, and zero otherwise.
We substitute the term $\left(w-\theta_{3}\right)^{2}\mathds{1}_{_{[3.5,\theta_{3})}}\left(w\right)$ in \eqref{Stukel} with  $\left(w-\theta_{3}\right)^{4}\mathds{1}_{_{[3.5,\theta_{3})}}\left(w\right)$, so as to capture more accurately the curvature in the left tail of the WTPC, cf. Figure \ref{fig:2013_full-clean-overlayCleanWTPC}. For this reason, we refer to this model as the modified Stukel model (mStukel).

In Table \ref{tab:ModelFit}, we present the MSE and BIC for the 5-PL model and the mStukel model according to \eqref{Constrained}. As it is evident from the results presented in Table \ref{tab:ModelFit}, the mStukel model drastically improves the fitness of the WTCP. The parameter estimates, $\hat{\bm{\theta}}^{(g)}$, of the mStukel model with the corresponding standard errors are given in Table~\ref{tab:LogisticParameterEstimates}.

\begin{table}[h!]
    \centering
    \caption{MSE and BIC for the fitted models on the training and validation datasets}
    \label{tab:ModelFit}
     \begin{tabular}{lcccc}\hline
            &\multicolumn{2}{c}{\underline{\bf{Training dataset}}} &&\underline{\bf{Validation dataset}} \\
       \bf{Models}             & \bf{MSE}&\bf{ BIC}   &&   \bf{MSE}\\ \hline
       5-PL                    & 1554.2700  & 131000  &&  1650.7300 \\
       mStukel  &  884.4321  & 123710  && 1020.3800\\  \hline
     \end{tabular}
\end{table}

\begin{table}[h]
\centering
\caption{Parameter estimates (StandardError) for the mStukel logistic model on the training dataset}
\label{tab:LogisticParameterEstimates}
\begin{tabular}{cc}\hline
     \bf{ Parameter} &  \bf{  Training set } \\\hline
      $\hat\theta_1$ &     -30.8580(1.3187) \\
      $\hat\theta_2$ &     ~0.5845(0.0010)   \\
      $\hat\theta_3$ &     ~9.6481(0.0032)  \\
      $\hat\theta_4$ &     ~2010.46(1.2119)  \\
      $\hat\theta_u$ &     ~0.1602(0.0019) \\
      $\hat\theta_\ell$ &  -0.0010(0.00004)\\\hline
\end{tabular}
\end{table}

\subsubsection{Piecewise linear model class}
\label{sec:PiecewiseLinear}

Piecewise linear models are not particularly appealing for practical use for many reasons, but they are very useful as benchmarks. We include piecewise linear models so they can serve as a benchmark non-parametric  model class and because, as it is shown in the sequel, cf. Proposition \ref{prop:PWL_MSE_bound}, this class can achieve the bound of the MSE.

Let the piecewise linear function be defined as follows
\begin{equation}
\label{eq:linearPowerModelNoBoundary}
p =  \mathcal{L}_{\bm\theta^{(\ell)}}(w)= \theta + \sum_{k=0}^{m-1} \I\{s_k \leq w\} (w - s_k)\theta_{k}.
\end{equation}
The parameter vector $\bm\theta^{(\ell)}$ consists of the (height) parameter $\theta$ and the segment slope parameters $\theta_{k}$, $k=0,1,\ldots,m-1$. The splitting points $s_0, \ldots, s_{m-1}$ should be defined beforehand. Throughout the paper,  we use equidistant splitting points on the interval $[3.5, 15]$ and we estimate the parameters of the constrained model $\bar{\mathcal{L}}_{{\bm \theta}^{(\ell)}}$ defined in  Section~\ref{sec:IncorporatingBounds}.

The piecewise linear model can achieve the bound of the MSE on the training data. This is due to the quantized nature of the values of the data to one decimal digit. Thus, using 116 splitting points for the piecewise linear model, we can cover the entire range of wind values in $[3.5,15]$. In this case, the least-squares estimate of the power output is given as the average of the power values of samples given the value of the wind speed, thus attaining the lower bound of the MSE on the training data.

\begin{prop}[Piecewise linear model attaining the lower bound of the MSE]
\label{prop:PWL_MSE_bound}
For a scalar dataset with one dimensional regressors with $|\mathcal{X}|=m+1$ a piecewise linear model of order $m$ with split points $\mathcal{X}$ attains the minimal MSE bound given in Proposition~\ref{stat:MSELowerBound}.
\end{prop}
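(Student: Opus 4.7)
My plan is to invoke Proposition~\ref{stat:MSELowerBound}, which characterizes the minimal MSE bound as being attained precisely when the fitted model satisfies $\mathcal{M}_{\bm\theta}(x) = \bar{y}_x$ for every $x \in \mathcal{X}$. It therefore suffices to exhibit a parameter vector $\bm\theta^{(\ell)}$ such that the corresponding piecewise linear function~\eqref{eq:linearPowerModelNoBoundary} interpolates the $m+1$ target values $\bar{y}_x$ at the $m+1$ points of $\mathcal{X}$. Since the piecewise linear model has exactly $m+1$ free parameters (the height $\theta$ together with the $m$ slopes $\theta_0,\ldots,\theta_{m-1}$), the degrees of freedom match the number of interpolation constraints, which is the essential ingredient.

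First, I would enumerate the elements of $\mathcal{X}$ as $x_0 < x_1 < \cdots < x_m$ and take the split points to be the first $m$ of these values, $s_k = x_k$ for $k = 0,1,\ldots,m-1$, as stipulated in the statement. Then, for any $j \in \{0,1,\ldots,m\}$, I would evaluate the model at $w = x_j$: the indicator $\I\{s_k \leq x_j\}$ vanishes for $k > j$, and the term for $k = j$ contributes $0$ since $(x_j - s_j) = 0$, so
\begin{equation*}
\mathcal{L}_{\bm\theta^{(\ell)}}(x_j) \;=\; \theta + \sum_{k=0}^{j-1}(x_j - x_k)\,\theta_k.
\end{equation*}

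Second, I would impose the $m+1$ interpolation equations $\mathcal{L}_{\bm\theta^{(\ell)}}(x_j) = \bar{y}_{x_j}$ for $j=0,1,\ldots,m$. Written as a linear system in the unknown vector $(\theta,\theta_0,\ldots,\theta_{m-1})^\top$, this system is lower triangular: the $j$-th equation involves only $\theta$ and $\theta_0,\ldots,\theta_{j-1}$. The diagonal entries are $1$ and the positive gaps $x_{j+1}-x_j$ for $j=0,\ldots,m-1$, so the triangular matrix is non-singular and the system admits a unique solution, computable by forward substitution: $\theta = \bar{y}_{x_0}$, then recursively $\theta_{j} = \bigl(\bar{y}_{x_{j+1}} - \theta - \sum_{k=0}^{j-1}(x_{j+1}-x_k)\theta_k\bigr)/(x_{j+1}-x_j)$.

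Third, plugging this parameter choice back into the model yields $\mathcal{L}_{\bm\theta^{(\ell)}}(x) = \bar{y}_x$ for every $x \in \mathcal{X}$, which by the equality condition derived in Proposition~\ref{stat:MSELowerBound} attains the lower bound~\eqref{eq:MSELowerBound}. There is no serious obstacle here; the only thing to verify carefully is the matching of degrees of freedom to constraints and the triangular structure of the interpolation system, both of which follow directly from the form of~\eqref{eq:linearPowerModelNoBoundary} once the split points are aligned with $\mathcal{X}$.
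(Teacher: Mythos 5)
Your proof is correct and is essentially the paper's argument in different clothing: the paper proceeds by induction on $|\mathcal{X}|$, and your forward substitution through the lower-triangular interpolation system is exactly that induction step, with your recursion for $\theta_j$ matching the paper's explicit formula for $\theta_{m-1}$ verbatim. The only cosmetic difference is that you make the degrees-of-freedom count and the triangular structure explicit up front, which the paper leaves implicit.
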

\begin{proof}
For $|\mathcal{X}|=1$ the only parameter to be estimated is $\theta$, which should be chosen as $\bar{y}$, cf.  Proposition~\ref{stat:MSELowerBound}.

The rest of the proof is based on induction on the cardinality of the set $\mathcal{X}$, denoted by $|\mathcal{X}|$. Let $\left(x^{(i)}\right)_{0\leq i \leq m}$ be the ordered values of $\mathcal{X}$, such that $x^{(0)}<x^{(1)}<\cdots<x^{(m)}$ and lets assume that the parameters $\theta$, $\theta_0,\ldots,\theta_{m-2}$ are chosen such that the linear model with these parameters attains the minimal MSE on the restricted dataset having regressors $\mathcal{X} \setminus \{x^{(m)}\}$. To prove the statement, we need to show that $\theta_{m-1}$ can be chosen such that $\mathcal{L}_{\bm \theta}(x^{(m)}) = \bar{y}_{x^{(m)}}$. From the definition of the piecewise linear function
\[
\mathcal{L}_{{\bm \theta}^{(\ell)}}(x^{(m)}) = \theta + \sum_{k=0}^{m-2}(x^{(m)} - x^{(k)})\theta_k + (x^{(m)} - x^{(m-1)})\theta_{m-1}.
\]
Solving this equation for $\theta_{m-1}$, we get that
\[
\theta_{m-1}= \frac{\bar{y}_{x^{(m)}} - \theta - \sum_{k=0}^{m-2}(x^{(m)} - x^{(k)})\theta_k}{x^{(m)} - x^{(m-1)}},
\]
which concludes the proof.
\end{proof}

It can be stated in general that once a model structure has enough degrees of freedom to assign the estimates $\mathcal{M}_{\bm\theta}(w)$ independently to every wind value $w\in \mathcal{X}$, then the lower bound for the MSE value can be attained.

Piecewise linear model classes, with a fixed number of splitting points equidistantly chosen in an interval, are not properly nested, if $m=1,2,\ldots$. Proper nesting is achieved, if $m=2^0,2^1, 2^2, \ldots$, or if some other exponential series is chosen. The parameters $\hat{\bm \theta}_m^{(\ell)}$ of a model belonging to the fixed choice of $m$ can be estimated for different values of $m$, but then the problem reduces to optimally choosing $m$, which is a model selection problem.

The other reason, why it is instructive to examine the properties of the piecewise linear model structure, is that, assuming Gaussian residuals, the combined variance of the estimated parameters can be calculated analytically. This is visualized in Figure~\ref{fig:linearModelCovarianceTrace} as the trace of the estimated covariance matrix of the parameters is shown against the complexity of the model class $m$. This shows the generic features of model selection problems.

For very small values of $m$ the modeling error is big, so the estimated variance of those few estimated parameters is going to be big (combination of modeling error and variance from noise), so the sum is going to be a sum of few but large in absolute value entries. Values of $m$ that correspond to a model class that can properly model the data will result in a sum that contains more summand terms, but with smaller in absolute value entries. The variance of the parameters in this case is expected to be small for two reasons: \emph{i)} the modeling error is reduced or eliminated; \emph{ii)} a small number of parameters needs to be estimated from the data. For higher values of $m$ the number of summands will increase and so will the corresponding absolute values of the entries. This is because the modeling error was already minimized and a higher number of parameters needs to be estimated from the data, which increases their variance. This heuristic results in a quasiconvex shape of the MSE as a function of the complexity parameter $m$ (a.k.a. the model order). The goal of model selection is to define how an optimal model order $\hat{m}$ should be chosen. This question arises in the case of all non-parametric  model classes and our approach is based on the BIC, see Section~\ref{sec:ModelSelection}.

As it can be seen in Figure~\ref{fig:linearModelCovarianceTrace}, the trace of the estimated covariance matrix of the parameters is quasiconvex shaped, as it decreases in the beginning and then it increases rapidly when the model order is increased. If this is compared to the decrease of the MSE shown in Figure~\ref{fig:ModelComparisonTraining}, we see that going above a given complexity level just adds unnecessary uncertainty to the estimation without improving the modeling precision. This trade-off should be balanced by the model selection algorithm. Using model selection based on the BIC, see Section~\ref{sec:ModelSelection}, the optimal number of linear segments turns out to be $m = 13$. The parameters of the estimated model are given in Table~\ref{tab:LinearEstimate}. The MSE of this model on the training data is $815.5127$, while on the validation set it is $974.6084$.

\begin{figure}
	\centering
	\includegraphics{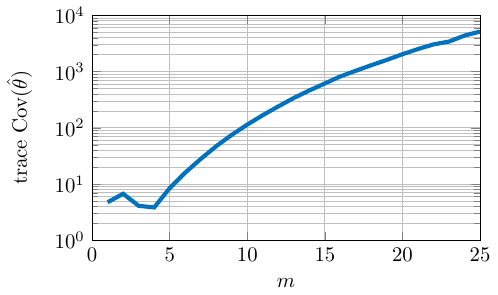}
	\caption{The sum of the parameter variances for the piecewise linear model assuming Gaussian residuals}
	\label{fig:linearModelCovarianceTrace}
\end{figure}

\begin{table}
\caption{Estimated parameters of a piecewise linear model with $m=13$ segments}
\label{tab:LinearEstimate}
\begin{center}
\begin{tabular}{cc|cc}
\hline
\multicolumn{4}{c}{$t_k = 3.5 + k\frac{15-3.5}{13}$, $k=0, \ldots,12$ } \\
\hline
 \bf{ Parameter} &  \bf{  Training set }& \bf{ Parameter} &  \bf{  Training set } \\\hline
$\hat{\theta} $&-8.3398  & $\hat{\theta}_0$&0.8929 \\
$\hat{\theta}_1$& 95.4169  & $\hat{\theta}_2$& 17.8948\\
$\hat{\theta}_3$& 42.1545  & $\hat{\theta}_4$& 58.0053\\
$\hat{\theta}_5$&  38.8957 & $\hat{\theta}_6$& 60.8006\\
$\hat{\theta}_7$& 47.2667  & $\hat{\theta}_8$&6.8485 \\
$\hat{\theta}_9$& -68.7682 & $\hat{\theta}_{10}$& -207.2188\\
$\hat{\theta}_{11}$& -94.4818 & $\hat{\theta}_{12}$& 4.8853\\
\hline
\end{tabular}
\end{center}
\end{table}

\subsubsection{Polynomial model class}
\label{sec:Polynomial}

A univariate polynomial model of degree $m$ of the power function is given as
\begin{equation}
\label{eq:polyPowerModelNoBoundaryUnstable}
p=\mathcal{P}_{\bm\theta}(w)= \sum_{i=0}^m \theta_i w^i.
\end{equation}
The formulation given in \eqref{eq:polyPowerModelNoBoundaryUnstable} should be adapted to take into account the constrained model $\overline{\mathcal{P}}$ defined in Section~\ref{sec:IncorporatingBounds}. However, even after the transformation to the constrained model, and the reduction of the wind range to practically $[3.5, 15]$, we have to note that estimating the parameters $\theta_i$, $i=0,\ldots, m$, of the polynomial model is a numerically difficult problem. This is due to the fact that, e.g., for a polynomial model of degree $m=14$, the coefficient matrix of the parameters includes entries corresponding to values $1,15,15^2,\ldots,15^{14}$. Inverting such a matrix is numerically unstable due its high condition number, cf. \citep{belsley2005regression}.

To overcome the numerical stability issues, one of the simplest techniques is to rescale the argument $w$ of the polynomial, so higher powers of the argument will still remain numerically tractable. With this change, we redefine the polynomial model as
\begin{equation}
\label{eq:polyPowerModelNoBoundary}
\mathcal{P}_{{\bm\theta}^{(p)}}(w) = \bar{p} + d_p  \sum_{i=0}^m  \theta_i \left(\frac{w-\bar{w}}{d_w}\right)^i ,
\end{equation}
where the polynomial parameter vector $\bm\theta^{(p)}$ contains the coefficients of the polynomial $\mathcal{P}$ as well as the scaling parameters $\bar{w}$, $\bar{p}$, $d_w$, $d_p$. $\bar{w}$ and $\bar{p}$ denote the sample averages of the wind speed ($w$), and the sample average of the power output ($p$), respectively, while $d_w$ and $d_p$ denote the sample standard deviation of the wind speed ($w$), and of the power output ($p$), respectively. The model order parameter for polynomial models is the degree of the polynomial, $m$.

Polynomial models are not performing well according to the literature. This is the result of a combination of factors: Firstly, they are not capable of capturing the flat plateau on the left and the right tail of WTPC. Once this obvious drawback is compensated by considering the constrained model, polynomial models drastically increase their fitness. Secondly, there are various numerical difficulties associated with the estimation of the parameters of polynomial models. Unfortunately, this issue constitutes a significant drawback especially at higher model orders, as we show in Section~\ref{sec:ComparisonOfModelClasses}.

Estimating (in the least squares sense) the coefficients of a polynomial with degree $m=14$ results in the parameters presented in Table~\ref{tab:PolynomialEstimate}. The choice of degree $m=14$ is explained in Section~\ref{sec:ModelSelection}. The MSE of this model on the training data is $812.2287$, while on the validation set it is $969.8870$.

Note, that the polynomial coefficients in Table~\ref{tab:PolynomialEstimate} are reported with 15 decimal digits, as we take into account the support of the wind values $[3.5,15]$ and the maximum degree of the polynomial model: this is due to the fact that for example for  the polynomial of degree $m=14$ the leading coefficient of the polynomial, $\hat{\theta}_{14}$, is multiplied with $ \left(\frac{w-7.241154953692900}{3.092009009051451}\right)^{14}$ for $w\in[3.5,15]$. This illustrates that the estimation of the polynomial coefficients is numerically sensitive, which is not the case for the other discussed non-parametric model classes.

\begin{table}[h!]
\caption{Estimated parameters of a polynomial model with degree $m=14$}
\label{tab:PolynomialEstimate}
\begin{center}
\begin{tabular}{cc|cc}
\hline
 \bf{Par.} &  \bf{Training set }& \bf{Par.} &  \bf{  Training set } \\
\hline
$\hat{\bar{p}}$&$1012.7$ & $\hat{\bar{w}}$&$7.241154953692900$ \\
$\hat{d}_p$&$601.0210490157367$ & $\hat{d}_w$&$3.092009009051451$ \\
$\hat{\theta}_0$&$-1.083983804472287$ & $\hat{\theta}_8$&$0.913785265115761$ \\
$\hat{\theta}_1$&$1.027493542215327$ &  $\hat{\theta}_9$&$0.158488326138962$  \\
$\hat{\theta}_2$&$0.437620131289974$ &  $\hat{\theta}_{10}$&$ -0.462562253267288$    \\
$\hat{\theta}_3$&$-0.258311524269187$ & $\hat{\theta}_{11}$&$0.100868720012586$ \\
$\hat{\theta}_4$&$0.152839963020718$ & $\hat{\theta}_{12}$&$0.068782606353010$\\
$\hat{\theta}_5$&$0.837258874326937$ & $\hat{\theta}_{13}$&$-0.034900832592028$\\
$\hat{\theta}_6$&$-0.693269004241413$ & $\hat{\theta}_{14}$&$0.004495461408312$\\
$\hat{\theta}_7$&$-0.791866808461089$ & \\ \hline
\end{tabular}
\end{center}
\end{table}

\subsubsection{Spline model class}
\label{sec:Spline}
Splines provide a universal family for approximating smooth functions. A spline is defined by a series of knot points and by polynomials representing its value between the knot points in a continuous way \citep{schumaker2007spline}. Formally a cubic B-spline is given by a triplet of parameters ${\bm{\theta}^{(s)}=(m, \bm{k}, \bm{\alpha})}$, where $m \in \mathbb{N}_+$ is the number of basis functions used,  $\bm{k} \in \R^{m+4}$ is a vector of knot points in nondecreasing order, $\bm{\alpha} \in \R^m $ is the vector of coefficients for the basis functions $B_{i, \bm{k}, 3}$ defined by the Cox-de Boor recursion \citep{Boor1978}
\begin{equation}
\label{eq:CoxdeBoor}
\begin{aligned}
B_{i, \bm{k},0} (x) &= \I_{[{k}_i ,{k}_{i+1})}(x), \ i=1, \ldots , m+3,\\
B_{i, \bm{k}, d} (x)&= \frac{x-{k}_i}{{k}_{i+d}-{k}_i}B_{i, {k},d-1} \\
&+ \frac{{k}_{i+d+1}-x}{{k}_{i+d+1}-{k}_{i+1}}B_{i+1, {k},d-1}, \ d=1,2,3,\\
& i=1, \ldots, m+3-d.
\end{aligned}
\end{equation}
A cubic B-spline model for the WTPC is given as
\begin{equation}
\label{eq:SplinePowerModelNoBoundary}
p = \mathcal{S}_{\theta^{(s)}}(w) = \sum_{i=1}^m \bm{\alpha}_i B_{i, \bm{k},3} (w).
\end{equation}
The complexity of cubic spline models is defined by the number of basis functions $m$. If the knot points $\bm k$ are fixed then the parameters $\bm \alpha$ can be estimated analytically in the least-squares sense, but this cannot be done simultaneously with the location of the knots \citep{Kang2015179}. We use a simple suboptimal procedure to find the estimates, which performs the estimation in two rounds. In the first round the knot points are equidistantly chosen in the $[3.5, 15]$ interval and the parameters $\bm \alpha$ are estimated. In the second round, new knot points are calculated, based on the data and the first round estimates, using the MATLAB\textsuperscript{\textregistered} routine \emph{newknt}, which reallocates the knot points to allow a better estimation of $\bm \alpha$. Then $\bm \alpha$ is estimated for the second time.

The estimated parameters $\hat{\bm{\theta}}^{(s)}$ of a cubic spline using $m=17$ B-splines are
\begin{equation}
\label{eq:parBSpline1}
\begin{split}
\hat{\bm{\alpha}} = [& -8.0336698 \
                       -7.2559215 \
                       -23.865741 \\ &
                       78.529492 \
                       156.55003 \
                       272.98557 \\ &
                       452.80144 \
                       690.69908 \
                       1022.923  \
                       1400.7208 \\ &
                       1721.1444 \
                       1921.2212 \
                       1998.4378 \
                       1992.549  \\ &
                       2005.308 \
                       1997.8069 \
                       2000.3969]
\end{split}
\end{equation}
with knot points
\begin{equation}
\label{eq:parBSpline2}
\begin{split}
\hat{\bm{k}} =  [&3.5 \
                 3.5 \
                 3.5 \
                 3.5 \
                 4.4247 \
                 5.2668 \
                 5.9855 \\ &
                 6.7569 \
                 7.6994 \
                 8.6481 \
                 9.7265 \\ &
                 10.8994 \
                 11.6831 \
                 12.3575 \
                 12.9990 \\ & 
                 13.6470 \
                 14.3235
                 15 \
                 15 \
                 15 \
                 15].
\end{split}
\end{equation}

An interesting feature of the resulting $\hat{\bm{k}}$ is that its first four entries and last four entries coincide. As it can be deduced from Equation \eqref{eq:CoxdeBoor}, the multiplicity of the knot points shows how smooth is the function at the specific knot point. The two endpoints (due to their high multiplicity) indicate that the higher order derivatives are zero at the endpoints of the support $[3.5, 15]$, so the estimated WTPC is flat at the left and right tails of the support. This is expected and desired, since the support was chosen so that the WTPC outside this support is constant (left and right tail of the WTPC).

B-splines are zero outside the range defined by the knot points, so a proper power function estimate is obtained by transforming $\mathcal{S}$ to the constrained model $\overline{\mathcal{S}}$ defined in Section~\ref{sec:IncorporatingBounds}. The MSE of the model $\overline{\mathcal{S}}$, with the parameters given above, evaluated on the training data, is $811.9171$, while on the validation set, it is $969.5854$.

We note that \cite{ShokrzadehJozani2014} developed a much more evolved procedure for the selection of  the number of the knot points, as well as for the selection of the location of the knot points, but such a complicated model choice does not improve more than 1\% the modeling fit, which is insignificant if compared to the improvements achieved by incorporating the wind direction and the ambient temperature, and by the addition of the dynamic layer. 

\subsubsection{Model selection based on BIC}
\label{sec:ModelSelection}

In the case of non-parametric models, the model class consists of subclasses indexed by the complexity parameter $m$ (a.k.a. model order), i.e., piecewise linear models with an increasing number of segments, polynomial models with an increasing degree, or spline models with an increasing number of basis functions. The appropriate model order should be selected in a way that adheres to the principle of parsimony: Goodness-of-fit must be balanced against model complexity in order to avoid overfitting--that is, to avoid building models that in addition to  explain the data, they also explain the independent random noise in the data at hand, and, as a result, fail in out-of-sample predictions.

There are several approaches for selecting a model, among others the AIC \citep{AkaikeIC} or the BIC \citep{SchwarzBIC}. Although AIC can be asymptotically optimal under certain conditions, BIC penalizes the model complexity stronger. Therefore we use the BIC for the selection of the models reported in the previous sections. The BIC is defined as
\begin{equation*}
\mathrm{BIC} (\hat{\bm \theta}_m) = \ln(N) n_{\hat{\bm \theta}} - 2 \ln (\hat L),
\end{equation*}
where $N$ is the number of data samples used to estimate $\hat{\bm \theta}_m$, $n_{\hat{\bm \theta}}$ is the number of estimated parameters and $\hat L$ is the estimated likelihood of the observations assuming the model with estimated parameters $\hat {\bm \theta}_m$.

Assuming a Gaussian noise model $p_k = \mathcal{M}_{\tilde{\bm \theta}}(w_k) + \varepsilon_k$, $k=1,\ldots,N$, where $\varepsilon_k \stackrel{\text{i.i.d.}}{\sim} \mathcal{N}(0, \sigma^2)$, we get that the BIC can be written as
\begin{equation*}
\mathrm{BIC} (\tilde{{\bm \theta}}) = \ln(N) n_{\tilde{{\bm \theta}}} + N \ln(2 \pi \sigma^2) + \frac{\sum\limits_{i=1}^N \varepsilon_i^2}{\sigma^2}.
\end{equation*}

If the parameters $\hat{\bm \theta}_m$ of a model from the subclass with complexity $m$ are estimated using the training data, then the MSE on the training data, say $\mathrm{MSE}_{m}$, is an asymptotically unbiased estimator for the unknown variance $\sigma^2$. Thus, evaluating the BIC on the training data yields that
\begin{equation*}
\mathrm{BIC} (\hat{\bm \theta}_m) \approx \ln(N) n_{\hat{\bm \theta}_m} + N \ln( \mathrm{MSE}_{m}) + N \ln(2 \pi) + 1.
\end{equation*}

Models with different complexity are compared using the $\mathrm{BIC} (\hat{\bm \theta}_m)$ and the model complexity is estimated as
\begin{equation*}
\hat{m} = \argmin_m  \mathrm{BIC}  (\hat{\bm \theta}_m),
\end{equation*}
resulting in the final estimate
\begin{equation*}
\hat{\bm \theta} = \hat{\bm \theta}_{\hat m}.
\end{equation*}

Using this procedure, we obtain that for the piecewise linear models, the optimal number of segments is $13$, for the cubic spline models the optimal number of basis functions is $17$, while for the polynomial models the optimal degree is $14$.

\subsection{Comparison of models from different classes}
\label{sec:ComparisonOfModelClasses}

Figures~\ref{fig:ModelComparisonTraining}~and~\ref{fig:ModelComparisonValidation} depict the behavior of the MSE as a function of the complexity for the different model structures on the training and the validation datasets, respectively. The goal of this section is to summarize the remarks that can be made based on these figures. Since, the mStukel logistic model has a fixed number of complexity parameters $m$ (a.k.a. fixed model order), its MSE is depicted in Figures~\ref{fig:ModelComparisonTraining}~and~\ref{fig:ModelComparisonValidation} as a constant function in $m$, taking values 884.4321 and 1020.3800, on the training and validation sets, respectively.

\begin{figure}[!h]
     \begin{subfigure}[b]{0.9\linewidth}
            	\includegraphics[scale=0.9]{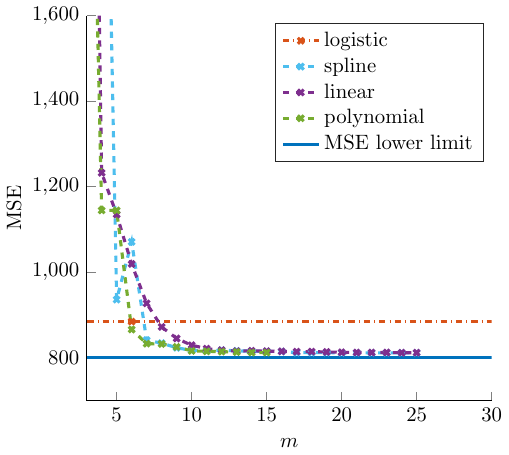}
	\caption{The MSE on the training set}
	\label{fig:ModelComparisonTraining}
	\end{subfigure}\\
\begin{subfigure}[b]{0.9\linewidth}
           	\includegraphics[scale=0.9]{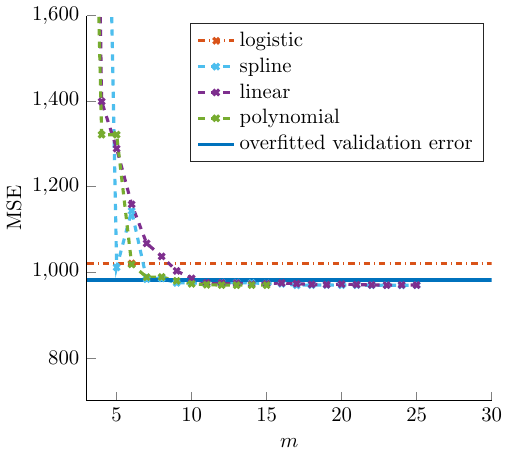}
	\caption{The MSE on the validation set}
	\label{fig:ModelComparisonValidation}
    \end{subfigure}
    \caption{The MSE of different model types on the training set (\protect\subref{fig:ModelComparisonTraining}) and on the validation set  (\protect\subref{fig:ModelComparisonValidation}) as a function of model complexity}
\end{figure}

The logistic model class, due to its parametric nature, only contains models that have a specific shape similar to what is expected from the WTPC. This is the reason why it performs relatively well, if compared to other models with matching complexity $(m=6)$, especially when compared to the piecewise linear and the spline models. The main advantage of parametric models is that they have a pre-defined shape that matches the data, and they can describe the model with a much smaller number of parameters. As a result they can be used in case the data sparsely covers the support. However, in our case, due to the large amount of data covering densely the full support of the wind values, such advantage does not become apparent.

As stated in Proposition~\ref{stat:MSELowerBound}, we can calculate the MSE lower bound based on the data. Regardless of the model class, the MSE converges to that limit, as the complexity parameter tends to infinity, $m \rightarrow \infty$. Moreover, for some of the model classes we  investigate, convergence will occur with finite complexity, e.g., piecewise linear models converge at $m=116$, cf. Section~\ref{sec:PiecewiseLinear}. Similar complexity values can be calculated for the other model classes. It is important to note that the MSE converges rapidly to the lower bound for small values of $m$, while for large values of $m$, convergence seems to slow down significantly. As it can be seen in Figure~\ref{fig:ModelComparisonTraining}, this happens in the range $m \in [10,15]$, depending on the model class.

As expected, when considering a very complicated model, then the validation error has the tendency to increase in comparison to the optimal complexity model. The solid line in Figure~\ref{fig:ModelComparisonValidation} depicts the validation error of the model that attains the lower limit of the estimation error. As it is visible in the figure, this overfitting error is really small in comparison to the validation error of model orders around the optimal order (the validation error of the most overfitted model is 955.9658 that is approximately 1\% worse than the validation errors of the different models). This shows that the data are fully covering the support $[3.5, 15]$ of the wind speed for the constrained model, and that, in our case, overfitting issues are of minor importance, as such overfitting does not impact significantly the validation error.

The model orders selected by the BIC, cf. Section~\ref{sec:ModelSelection}, are all under $m=20$, so they result in relatively simple models. They require more parameters than the logistic model, but the comparison based on the BIC indicates that the extra flexibility of these models is needed. This is not unexpected, given the provided improvement in terms of the modeling error.

Here we can underline one of the main messages of the paper: non-parametric models seem to be more suitable for the WTPC modeling than parametric models. This is mainly due to the relatively simple shape of the WTPC and the large amount of available data that can be used for the estimation of models with high complexity.

Another advantage for non-parametric models is that their shape does not depend significantly on the knowledge of the cut-in speed and the rated speed. As long as a lower bound for the cut-in speed and an upper bound for the rated speed are known, these models obtain the right shape between the two values. Whereas, for parametric models these two values are either chosen beforehand, which impacts the model significantly, or they need to be estimated, which impacts significantly the difficulty of estimation (e.g., non-convexity, non-uniqueness, etc.).

Based on the above remark, that non-parametric models provide a better fit for the WTPC, we now turn our attention to the natural question on how to choose between the various classes of non-parametric models. This question is treated in the sequel in more detail.

In what follows, the goal is to show that in theory it should not matter which non-parametric model class we choose to estimate the WTPC, however practical considerations can still result in arguments against particular model classes. The main objective, when considering a model class, should be the numerical robustness of the estimation procedure that can provide the corresponding estimates.

The polynomial model structure is evaluated only up to degree $m=15$, cf. Figures~\ref{fig:ModelComparisonTraining} and \ref{fig:ModelComparisonValidation}. This is because estimating higher order polynomials is numerically infeasible, as we already mentioned in Section~\ref{sec:Polynomial}. When it comes to estimation of splines with fixed knot points $\bm k$, the estimates of the coefficients $\bm \alpha$ can be obtained in a numerically reliable way. Similarly for piecewise linear models, given the split points $(t_{k})$, the estimation is numerically reliable. Due to the simple shape of the WTPC, the allocation of these points is not particularly important. What could be gained by the optimal choice of these points is on the one hand negligible, as it can be seen from Figure~\ref{fig:ModelComparisonTraining}, and on the other hand it can be mitigated by adding extra parameters.

In order to illustrate that the choice of the model class is almost irrelevant, we define a measure of comparison for models from different model classes, say $\mathcal{M}_{\bm \theta_i}(\cdot)$, $i=1,2$, as follows
\begin{equation}
\Delta_{\bm \theta_1, \bm \theta_2} = \frac{\mathds{E}(\mathcal{M}_{\bm \theta_1}(W)-\mathcal{M}_{\bm \theta_2}(W))^2}{\min\left\{\mathds{E}(P-\mathcal{M}_{\bm \theta_1}(W))^2, \mathds{E}(P-\mathcal{M}_{\bm \theta_2}(W))^2\right\}},
\end{equation}
with $W$ and $P$ denoting the random wind speed and the random power output, respectively.
$\Delta_{\bm \theta_1, \bm \theta_2}$ measures what is the expected difference between predictions made by the two models $\bm \theta_1$ and $\bm \theta_2$ relative to the modeling error of the better of the two. This quantity is evaluated empirically both on the training and on the validation data using the models estimated earlier and the obtained values are reported in Table~\ref{tab:RelativeModelDiff}.
The difference between the logistic and the selected spline model $\Delta_{\hat{\bm \theta}^{(g)}, \hat{\bm \theta}^{(s)}}$ is approximately $10\%$, the difference between the piecewise linear and spline models $\Delta_{\hat{\bm \theta}^{(\ell)}, \hat{\bm \theta}^{(s)}} $ is under $1\%$, and $\Delta_{\hat{\bm \theta}^{(p)}, \hat{\bm \theta}^{(s)}} $ gets even smaller when it comes to the polynomial and spline models.
This indicates that optimizing the model selection with regard to the class is not expected to provide significant improvements. As there is no significant difference between the various non-parametric WTPC model classes, from this point onward, we restrict our analysis to the spline model given in Section~\ref{sec:Spline}.

\begin{table}
\caption{The relative difference between models of different classes}
\label{tab:RelativeModelDiff}
\begin{center}
\begin{tabular}{ccc}
\hline
 \bf{ Parameter} &  \bf{  Training set }& \bf{ Validation set} \\
\hline
$\Delta_{\hat{\bm \theta}^{(g)}, \hat{\bm \theta}^{(s)}} $& 0.0884 & 0.0715 \\
$\Delta_{\hat{\bm \theta}^{(\ell)}, \hat{\bm \theta}^{(s)}} $ & 0.0059 & 0.0048 \\
$\Delta_{\hat{\bm \theta}^{(p)}, \hat{\bm \theta}^{(s)}} $ & 0.0005 & 0.0004 \\
\hline
\end{tabular}
\end{center}
\end{table}

In the next sections, we address two points of concern: \emph{i)} we discuss how to improve the WTPC model by incorporating more environmental variables, such as the relative wind angle and the ambient temperature, and \emph{ii)} we explore if the residuals of the model are Gaussian and investigate how to incorporate the natural autocorrelation of the data into the model by specifying that the power output variable depends linearly on its own previous values and on a stochastic term. In Section~\ref{sec:MorePhysicalParamaters}, we discuss the results of incorporating more environmental variables into the power estimation, while in Section~\ref{sec:TimeSeriesPrediction}, we explore the possibility of estimating the power output based on previous measurements in time.

\section{Including more physical parameters}
\label{sec:MorePhysicalParamaters}

From a physical perspective the power output can be model as
\begin{equation}
\label{eq:physicalmodel}
p= \frac{1}{2}\rho\pi R^2C_p(\lambda,\beta) w^3,
\end{equation}
with $p$ the power captured by the rotor of a wind turbine, $\rho$ the air density, $R$ the radius of the rotor determining its swept area, $C_p$ the power coefficient which is a function of the blade-pitch angle $\beta$ and the tip-speed ratio $\lambda$, and $w$ the wind speed, see \citep[Eq.~(2)]{Lee2015}. Thus, although wind speed is the most relevant factor determining the power output, it is evident from \eqref{eq:physicalmodel} that other environmental or turbine specific factors impact the power output. One way to improve the modeling and forecasting capabilities of the WTPC model is to incorporate additional relevant parameters according to the physical first principle arguments. In accordance to our available data, we illustrate the additional benefits of  incorporating two additional environmental parameters: the relative incidence angle of the wind with respect to the  rotor plane, say $\phi$, and the ambient temperature recorded on the exterior of the wind turbine nacelle, say $T$. 

The new signals are incorporated into \eqref{eq:SplinePowerModelNoBoundary} as follows
\begin{equation}
\label{eq:AdditionalPhysicalLayers}
p = \mathcal{F}_{\bm{\theta}^{(f)}}(w, \phi, T) = \mathcal{S}_{\bm{\theta}^{(s)}}(w \cdot |\cos(\phi)|^{c_\phi})\left(1+c_T \left(T - \bar{T} \right) \right),
\end{equation}
with $\phi$ and $T$  the relative incidence angle and the ambient temperature, $\bar{T}$ the average temperature obtained by the training data, and $c_\phi \geq 0$ and $c_T$ are parameters to be estimated from the data.

The inclusion of these factors can be argued based on heuristic arguments as follows: As a rough approximation, it can be stated that power generation is only achieved by the perpendicular component of the wind speed to the rotor plane of the turbine. This perpendicular component is mathematically  represented by $w \cdot \cos(\phi)$. The introduction of the absolute value of the $\cos$ term ensures that the direction of the wind is not changed. Furthermore, the inclusion of the $c_\phi \geq 0$ parameter in the $|\cos(\phi)|^{c_\phi}$ term makes sure that the wind is not amplified (i.e., the wind speed cannot get a multiplier greater than one).
Regarding the inclusion of the temperature factor, this is motivated by the inherent physical relation of the temperature and the air density, as well as the prominent role of air density in the physical expression of the power output, cf. \eqref{eq:physicalmodel}. Without assuming any specific functional form for this dependence, the parameter $c_T$ can be thought of as the partial derivative of this relationship around the average temperature $\bar{T}$.

Using the B-spline WTPC model $\hat{\bm \theta}^{(s)}$ with complexity $m=17$, with estimated parameters $\hat{ \bm k}$ and $\hat{\bm \alpha}$ given in Section~\ref{sec:Spline}, we can estimate the value of $c_\phi$ and $c_T$ in the least squares sense. In Table~\ref{tab:EnvironmentalParameterEstimates}, we provide the MSE values of the model \eqref{eq:AdditionalPhysicalLayers}, where the parameters are estimated or fixed to zero in different combinations. Fixing either $c_\phi$ or $c_T$ is equivalent to omitting the corresponding modeling aspect. This allows us to see the impact of the different environmental parameters on the power generation.
 The first line contains the baseline, the MSE value of the WTPC model with only the wind factor. The other lines contain the MSE values corresponding to WTPC models generalized to include only the incidence angle; only the relative temperature; or both the angle and the temperature.  In the remainder of the paper, $\hat{\bm{\theta}}^{(f)}$ denotes the combination of the B-spline WTPC model $\hat{\bm{\theta}}^{(s)}$ generalized to include the two environmental factors, with estimates $\hat{c}_\phi=1.0115$ and $\hat{c}_T=-0.0050$. For this reason, when needed, we shall illustrate this by writing $\hat{\bm{\theta}}^{(f)}=(\hat{\bm{\theta}}^{(s)},\hat{c}_\phi,\hat{c}_T)$.

\begin{table}
\begin{center}
\caption{The MSE values of models given in \eqref{eq:AdditionalPhysicalLayers} with different modeling complexity}
\label{tab:EnvironmentalParameterEstimates}
\begin{tabular}{cccc}
\hline
 \bf{ $c_\phi$ parameter} & \bf{ $c_T$ parameter} &  \bf{MSE on }& \bf{ MSE on  } \\
 &   &  \bf{  training set}& \bf{  validation set} \\
\hline
$c_\phi=0$ & $c_T=0$ & 811.9171 & 969.5854\\
$\hat{c}_\phi=0.4279$ & $c_T=0$ & 810.1959 & 943.8046\\
$c_\phi=0$ & $\hat{c}_T=-0.0047$ & 690.5758 & 798.9763\\
$\hat{c}_\phi=1.0115$ & $\hat{c}_T=-0.0050$ & 681.6206 & 753.7712\\
\hline
\end{tabular}
\end{center}
\end{table}

From Table~\ref{tab:EnvironmentalParameterEstimates}, it is evident that the inclusion of the incidence angle does not improve significantly the model. This is evident by the difference between the MSE values of the two models captured in the first and second line, respectively, of the table, or similarly in the third and fourth line.
This difference is smaller than one percent. This is not because the incidence angle is not relevant for power generation,  but because the automatism in the turbine keeps the nacelle facing the most beneficial direction with regard to the power output. Figure~\ref{fig:RelativeAngleDensity} depicts the empirical density function of the incidence angle $\phi$ and it shows that the incidence angle is tightly concentrated around 0 degrees, which indicates that the wind is almost always nearly perpendicular to the rotor plane. Contrary, the inclusion of the temperature, with estimated parameter $c_T$, adds more than $15\%$ in the modeling precision. Figure~\ref{fig:TemperatureDiagram} depicts the values of the ambient temperature for the training set (red) and the validation set (blue). Moreover, the negative sign of $\hat{c}_T$ matches the physical insight that increasing the temperature leads to a decrease of the air density at constant pressure.

\begin{figure}[h!]
\centering
\begin{minipage}{.45\textwidth}
  \centering
\includegraphics{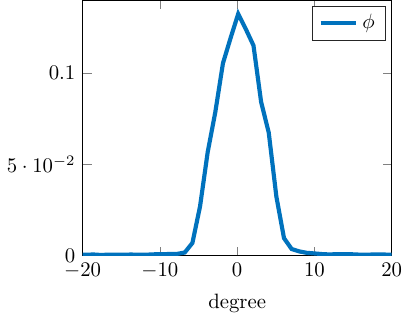}
	\caption{Empirical density of the incidence angle in degrees }
	\label{fig:RelativeAngleDensity}
\end{minipage}\ ~~~\ %
\begin{minipage}{.45\textwidth}
  \centering
	\includegraphics{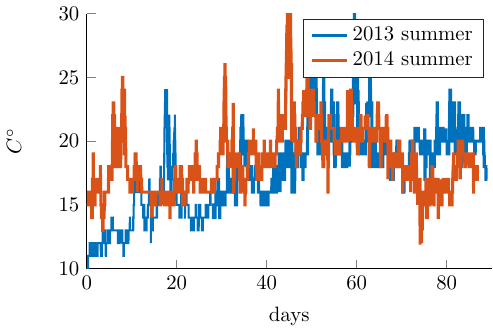}
	\caption{ The temperature variation during the training and the validation period }
	\label{fig:TemperatureDiagram}
\end{minipage}
\end{figure}

All in all, the impact of the temperature is much larger than that of the incidence angle because the turbine mechanisms cannot influence the temperature, as they can the incidence angle. According to \eqref{eq:physicalmodel}, besides the wind speed and the air density (in the form of temperature in our case, due to temperature data availability) there seem to be no other important environmental factors, that may affect the power generation  significantly and that can be predicted well. In the next section we investigate the residuals of the model with the environmental factors and model the power output by adding a time series layer that allows for a significant short-term forecasting improvement.

\section{Predicting power output using time series}
\label{sec:TimeSeriesPrediction}

The effect of the modeling error can be compensated, to some extent, by modeling the residual power output with a stochastic process that has a time scale, which is much slower than that of the wind turbine mechanics. This is based on the intuition that the power output reacts quickly to environmental changes (time scale of minutes), but the environment changes in a slower rate (if there is strong wind, that will last for a few hours with a high probability).

In order to model the power output more accurately, we need to understand the statistical properties of the residuals, which are obtained as follows
\begin{equation}
r = p-\mathcal{F}_{\hat{\bm{\theta}}^{(f)}}(w, \phi, T) = p-\hat p,
\end{equation}
where $\hat p$ is a brief notation for $\mathcal{F}_{\hat{\bm{\theta}}^{(f)}}(w, \phi, T)$.

The standard deviation of the distribution of the residual, conditioned on the wind speed value, is shown in Figure~\ref{fig:ResidualVariances}. It is apparent from the figure, that the standard deviation of the residuals is relatively small below the cut-in speed and above the rated speed, but this is not the case between these values. Thus, the WTPC model (even enhanced with the environmental factors) does not explain fully the power output leaving only the inherent randomness (captured by the residuals). Therefore, we need to further enhance the WTPC model. To this end, we concentrate in the range of wind values for which the standard deviation of the residuals seems to be large.

\begin{figure}[h!]
	\centering
	\includegraphics{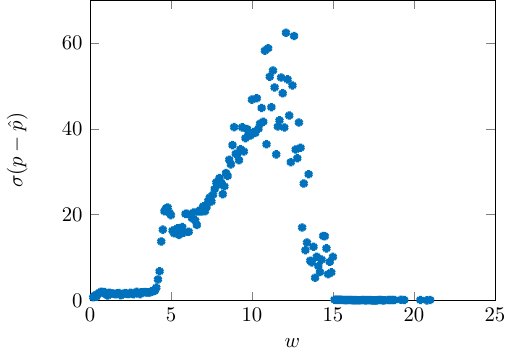}
	\caption{The standard deviation of the residuals $p-\hat{p}$ with respect to the wind speed}
	\label{fig:ResidualVariances}
\end{figure}

Let $\sigma_w$ denote the conditional standard deviation of the residuals $r$ restricted to observations with wind values equal to $w$
\begin{equation*}
\sigma_w = \frac{\left(\sum_{k=1}^N \I_{\{w_k = w\}}\left(p_k-\mathcal{F}_{\hat{\bm{\theta}}^{(f)}}(w_k, \phi_k, T_k)\right)^2\right)^{1/2}}{\left(\sum\limits_{k=1}^N \I_{\{w_k = w\}}\right)^{1/2}}.
\end{equation*}
The rescaled residual signal $r'$ is defined as
\begin{equation*}
r' = \frac{r}{\sigma_{w}}.
\end{equation*}

\begin{figure}[h!]
    \centering
    \begin{subfigure}[b]{0.45\textwidth}
	\includegraphics{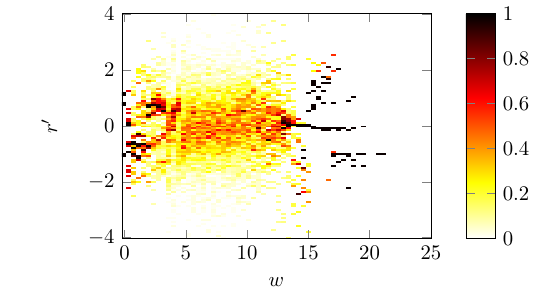}
	\caption{Wind support $[0,25]$}
	\label{fig:RescaledResidualDistributionFH}
    \end{subfigure}
    ~~~~~~
    \begin{subfigure}[b]{0.45\textwidth}
       \includegraphics{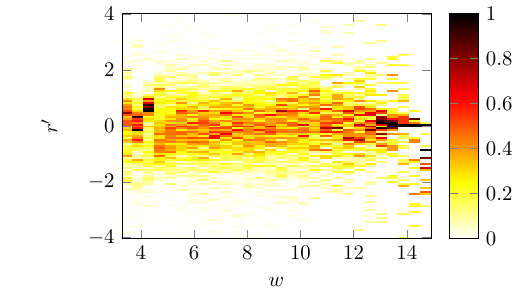}
	\caption{Wind support $[3.5,15]$}
	\label{fig:RescaledResidualDistributionRH}
    \end{subfigure}
    \caption{The conditional densities of the rescaled residuals}\label{fig:animals}
\end{figure}

Figure~\ref{fig:RescaledResidualDistributionFH} depicts the conditional distribution of the rescaled residual samples in the full wind speed range, while Figure~\ref{fig:RescaledResidualDistributionRH} concentrates on the range $[3.5, 15]$. There are some remarkable features that should be emphasized.

In Figure~\ref{fig:RescaledResidualDistributionFH}, it is notable that the rescaled residuals outside the range defined by the cut-in speed and the rated speed have distinctive patterns. These patterns are partly caused by the fact that the data are quantized and partly caused by the fact that we are rescaling the residuals, which in this case means that we are dividing with a standard deviation close to zero, see Figure~\ref{fig:ResidualVariances} for the values of  the standard deviation of the residuals.

In Figure~\ref{fig:RescaledResidualDistributionRH}, it is notable that the rescaled residuals have very similar conditional densities for different wind speed values between the cut-in speed and the rated speed, i.e. in the range $[3.5,15]$ the conditional distribution of the residuals given the wind speed is (approximately) independent of the wind speed. Thus, the wind speed is (approximately) independent from the rescaled residuals. This indicates that the WTPC captures most of the wind dependence in this range except for the variance. 

\begin{figure}[h!]
    \centering
    \begin{subfigure}[b]{0.45\textwidth}
	\includegraphics{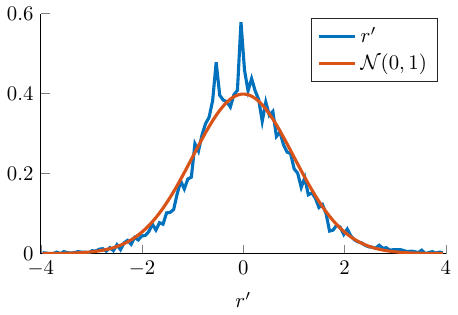}
	\caption{The marginal density (blue) of the rescaled residuals conditioned on $w\in[3.5, 15]$ and the density of the standard Gaussian distribution (red)}
	\label{fig:RescaledResidualDensityAndNormal}
    \end{subfigure}
\\
    \begin{subfigure}[b]{0.45\textwidth}
	\includegraphics{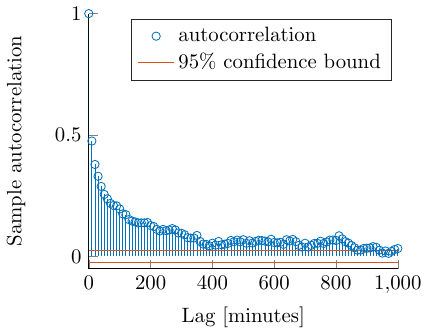}
	\caption{The sample autocorrelation of the rescaled residual \\    ~ \\     ~}
	\label{fig:RescaledResidualAutocorrelation}
    \end{subfigure}
    \caption{The marginal density and the autocorrelation of the rescaled residuals $r'_t$ }
\end{figure}

As it can be seen in Figure~\ref{fig:RescaledResidualDensityAndNormal}, the empirical density of the rescaled residuals inside the restricted wind speed interval resembles a Gaussian density. Combining this result with the fact that the residual is independent of the wind speed between the cut-in speed and the rated speed, we can reasonably assume that the rescaled residuals can be modeled using a Gaussian stochastic process.
Motivated by this result, in Section~\ref{sec:ConcreteARMAmodels}, we model the rescaled residuals $r'$ using an autoregressive moving-average (ARMA) model. This model can capture the inherent autocorrelation observed in the data,  cf. Figure~\ref{fig:RescaledResidualAutocorrelation}.

\subsection{Dynamic modeling}
\label{sec:ConcreteARMAmodels}
In all previous sections, for reasons of simplicity and for the clarity of the exposition, we suppressed the time index from the notation. This was also in accordance with the static models we investigated for modeling the WTPC. In this section, we further improve on the previous static models by incorporating a dynamic aspect satisfying also the Gaussian behavior of the rescaled residuals. This will additionally permit us to increase the short-term forecasting potential of our model. To this purpose, we reinstate the subscript $t$ (indicating the time dependence) to all variables and consider for modeling purposes an ARMA$(q_1,q_2)$ model, with
$q_1$ autoregressive terms (with coefficients $a _{i}$, $i=1,\ldots,q_1$) and $q_2$ moving-average terms  (with coefficients $c _{i}$, $i=1,\ldots,q_2$), i.e.,
\begin{equation}
\label{eq:ARMAform}
r'_{t}=1+\varepsilon _{t}+\sum _{i=1}^{q_1}a _{i}r'_{t-i}+\sum _{i=1}^{q_2}c _{i}\varepsilon _{t-i}.
\end{equation}
Taking into account the above model for the rescaled residuals, in the sequel, based on historical data till a given time $\tau$, we develop the forecasting model for the power output
\begin{equation}
\label{eq:DynamicPowerModel}
p_{t} = \mathcal{F}_{\bm{\theta}^{(f)}}(w_t, \phi_t, T_t) + \sigma_{w_t}{r}'_{t},\ t\geq\tau,
\end{equation}
where the estimated parameters for the model $\mathcal{F}_{{\bm{\theta}}^{(f)}}(w_t, \phi_t, T_t)$, the conditional standard deviation $\sigma_w$, and the parameters of the ARMA model are estimated based on historical data (e.g. same season in previous year), while the estimated values of the driving noise $\varepsilon_t$ depend on observations preceding $t$, but close to it in time, so this cannot be constructed based on historical data only.

We should note that the rescaled residuals seem to have a Gaussian density only inside the interval $[3.5,15]$ of the wind speed values. Outside of this range of values, the variance of the residuals is constant and seemingly very small. Thus, similarly to Section~\ref{sec:IncorporatingBounds}, we define the constrained model with regard to the rescaled residuals as follows
\begin{equation}
\label{eq:DynamicPowerModelDefinition}
\begin{split}
p_t =&  \mathcal{F}_{\bm{\theta}^{(f)}}(w_t, \phi_t, T_t)   \\
     &+ \I\{g_\ell \leq w_t \leq g_u \}  \sigma_{w_t}r'_t  \\
     &+ (1-\I\{g_\ell \leq w_t \leq g_u \})e_t,
\end{split}
\end{equation}
with $3.5\leq g_\ell<g_u\leq 15$. The values $g_\ell$ and $g_u$ will be determined, so as to ensure that within the  interval $[g_\ell,g_u]$ the conditional distribution of the rescaled residuals is Gaussian. Furthermore, $e_t$ is a Gaussian noise source, such that for every finite set of indexes $\{t_{1},\ldots ,t_{N}\}$ the corresponding components are independent and identically normally distributed. Moreover, $e_t$ is independent from the driving noise behind the rescaled residual signal,  $\varepsilon_t$.

It needs to be mentioned that the noise $\varepsilon_t$ cannot be obtained from the data using the model given in \eqref{eq:DynamicPowerModelDefinition}. To overcome this issue, we assume that the noise $\varepsilon_t$ and the rescaled residual process starts at zero at the beginning of the measurement time line, i.e. $\varepsilon_t$ and $r'_t$ are zero, for $t<0$, and assume that it stays ``frozen'' while the wind is outside the interval $[g_\ell,g_u]$. For the latter, we equivalently glue together consecutive periods of time for which the wind is within the desired range $[g_\ell,g_u]$.

The parameters of the constrained model described in Equation \eqref{eq:DynamicPowerModelDefinition} are:
\textit{i)} the parameters of the $\mathcal{F}_{\bm{\theta}^{(f)}}(w_t, \phi_t, T_t)$ model;
\textit{ii)} the parameters of the wind speed dependent residual rescaling factors $\sigma_{w}$;
\textit{iii)} the parameters of the ARMA($q_1,q_2$) model, say $\bm{\theta}_{\mathrm{ARMA}}$:  $\{a_1,\ldots,a_{q_1}\}$ and $\{c_1,\ldots,c_{q_2}\}$.
So the full parameter vector $\bm{\theta}$ of the model consists of $\bm{\theta}^{(f)}$, $\sigma_{w}$, $\{a_1,\ldots,a_{q_1}\}$ and $\{c_1,\ldots,c_{q_2}\}$.

As it is visible in Figure~\ref{fig:RescaledResidualDistributionFH}, the conditional distribution of the rescaled residuals conditioned on the wind speed is not Gaussian on the full wind speed support. In order to determine the lower and upper limits of the Gaussian range, $g_\ell$ and $g_u$, we performed the Anderson-Darling test to find the $p$-values for the conditional distributions that show how likely it is that the rescaled residuals, given the wind speed values, are samples from a standard normal distribution. Figure~\ref{fig:GaussianityTestForRescaledResiduals} shows the $p$-values of the test, along with the wind speed value boundaries that we used for later calculations. In particular, with relatively high confidence ($p\text{-value}>0.05$) we cannot reject the hypothesis that the samples come from a standard normal distribution for $g_\ell = 5.4$ and $g_u=13.6$, while outside these bounds the hypothesis can be rejected with extremely high confidence ($p\text{-value}\approx 0$).

\begin{figure}[h!]
	\centering
	\includegraphics{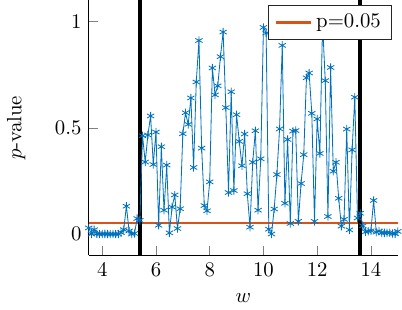}
	\caption{The $p$-value of the Anderson-Darling test for standard normality performed on the conditional distribution of the rescaled residuals}
	\label{fig:GaussianityTestForRescaledResiduals}
\end{figure}

Having decided on the values $g_\ell$ and $g_u$, we describe below the procedure for estimating the parameters $\bm{\theta}$ of the model described in Equation \eqref{eq:DynamicPowerModelDefinition}.
First, we estimate the parameters of the $\mathcal{F}_{\bm{\theta}^{(f)}}(w_t, \phi_t, T_t)$  model, next we estimate the parameters  $\sigma_{w}$ of the rescaled residuals, last we estimate the parameters of the ARMA model. This is indicated by the data, as the distribution of the rescaled residuals $r'_t$ is symmetric  around zero and it is independent of the wind $w_t$. All in all, this procedure is mathematically described as
\begin{equation}
\label{eq:DynamicModelSerialEstimation}
\hat{\bm \theta} = \argg_{\bm\theta} \left( \min_{\bm\theta_{\textrm{ARMA}}} \min_{\sigma_w} \min_{\bm{\theta}^{(f)}} \frac{1}{N} \sum_{k=1}^N (p_k-\hat{p}_k)^2\right),
\end{equation}
where the prediction $\hat{p}_k$ is obtained in accordance to Equation \eqref{eq:DynamicPowerModelDefinition} and $N$ is the total number of observations. Note that as $N\to\infty$, $\hat{\bm \theta} $ converges to the least square estimate obtained by optimizing for every parameter simultaneously, instead of the proposed sequential optimization. This follows from the fact that the residuals $p_k-\mathcal{F}_{\bm{\theta}^{(f)}}(w_t, \phi_t, T_t)$ are symmetric around zero.

\begin{table*}[t!]
    \centering
    \caption{ARMA model parameters belonging to different orders}
    \label{tab:ArmaModels}
     \begin{tabular}{ccccccc}\hline
        \bf{ Model order}&& \multicolumn{5}{c}{\bf{ model parameters}} \\  \hline
         $q_1=0, q_2=5$  && $c_1 = ~0.4188$ & $c_2 = ~0.2941$ & $c_3 = ~0.2379$ & $c_4 = ~0.1738$ & $c_5 = ~0.1085$ \\
         $q_1=5, q_2=0$  && $a_1 = ~0.4125$ & $a_2 = ~0.1271$ & $a_3 = ~0.0824$ & $a_4 = ~0.0375$ & $a_5 = ~0.0248$  \\
\multirow{ 2}{*}
        {$q_1=5, q_2=5$} && $a_1 = ~1.3982$ & $a_2 = -0.3649$ & $a_3 = -0.1556$ & $a_4 = ~0.3187$ & $a_5 = -0.2043$ \\
                         && $c_1 = -0.9894$ & $c_2 = ~0.0847$ & $c_3 = ~0.1463$ & $c_4 = -0.2927$ & $c_5 = ~0.0901$ \\ \hline
     \end{tabular} ~ \\[2ex]
\hrulefill
\vspace*{4pt}
\end{table*}

The estimated parameters $\hat{\bm\theta}$ are calculated as follows: For $\hat{\bm{\theta}}^{(f)}=(\hat{\bm{\theta}}^{(s)},\hat{c}_\phi,\hat{c}_T)$, we use
 the estimated parameters for the B-spline model  $\hat{\bm{\theta}}^{(s)}$ given in Section~\ref{sec:Spline}, and the environmental coefficients  $\hat{c}_\phi$ and $\hat{c}_T$  given in Section~\ref{sec:MorePhysicalParamaters}. For the rescaled residuals, $\hat{\sigma}_{w}$ is estimated in a non-parametric  way for every appearing wind value in the dataset and the  values are shown in Figure~\ref{fig:ResidualVariances}. For the parameters of the ARMA model, we refer to Table~\ref{tab:ArmaModels}. For the calculation of the parameters, we use the System Identification Toolbox \citep{Ljung2010SysIdToolbox} of Matlab.

\subsection{Comparison of model's forecasting capabilities}
\label{sec:forecasting}
This section describes the forecasting capabilities of the models outlined above. In particular, we use the simple B-spline WTPC model, cf. Equation \eqref{eq:SplinePowerModelNoBoundary}, as a baseline to underline the
improvement offered by utilizing the additional environmental regressors, cf. Equation \eqref{eq:AdditionalPhysicalLayers}, as well as modeling the variance and the correlations remaining in the residuals of the model, cf. Equation \eqref{eq:DynamicPowerModelDefinition}. The corresponding parameters for the models under consideration are presented in Sections~\ref{sec:Spline},~\ref{sec:ConcreteARMAmodels}, and~\ref{sec:MorePhysicalParamaters}, respectively. We depict the MSE of the three models in Figure~\ref{fig:predictionCapabilitesWithHorizons} as a function of the forecasting horizon.
For the calculation of the MSE as a function of the forecasting horizon, we need to note that although the sampling frequency of the data is every $\delta=10$ minutes, the forecasting horizon can receive any positive continuous value. Keeping this in mind, we define the MSE, given the forecasting horizon $h$, as follows
\begin{equation}
\label{eq:MSEForPredicionHorizon}
\mathrm{MSE}_h = \frac{1}{N-\lceil h/\delta \rceil} \sum_{k=\lceil h/\delta \rceil}^N \left(p_{k}- \hat{p}_{k|k-\lceil h/\delta \rceil}\right)^2,
\end{equation}
where for the prediction of the $k$-th value, $\hat{p}_{k|k-\lceil h/\delta \rceil}$, it is required to provide as an input the wind speed values $w_1,w_2,\ldots,w_k$, the temperature values $T_1,T_2,\ldots,T_k$, the angle values $\phi_1,\phi_2,\ldots,\phi_k$, and the power output values $p_1,\ldots,p_{k-\lceil h/\delta \rceil}$, i.e., we predict from the $k-\lceil h/\delta \rceil$ power output values the future, given perfect future information of the explanatory values.

\begin{figure}[h!]
    \centering
    \begin{subfigure}[b]{0.45\textwidth}
	\includegraphics{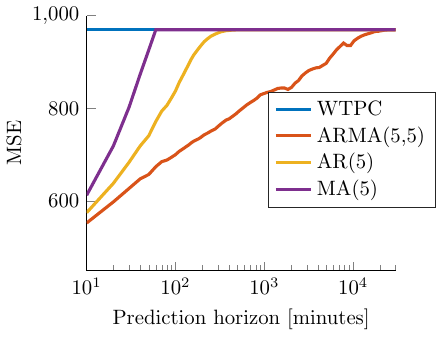}
	\caption{Simple WTPC}
	\label{fig:PredictionCapabilitesWithHorizonsValidationSpline}
    \end{subfigure}
    ~~~~~~
    \begin{subfigure}[b]{0.45\textwidth}
       \includegraphics{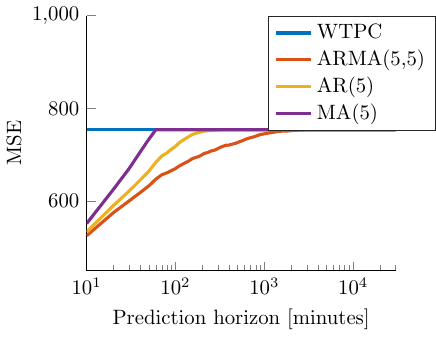}
	\caption{WTPC enhanced with temperature and angle}
	\label{fig:PredictionCapabilitesWithHorizonsValidationPhys}
    \end{subfigure}
    \caption{The MSE of the different models with respect to the prediction horizon $h$ evaluated on the validation set}
    \label{fig:predictionCapabilitesWithHorizons}
\end{figure}

As shown in Figure~\ref{fig:predictionCapabilitesWithHorizons}, the two static  WTPC models, cf.  Equations \eqref{eq:SplinePowerModelNoBoundary} and \eqref{eq:AdditionalPhysicalLayers}, given the wind speed, angle and temperature, have a constant MSE regardless of the forecasting horizon. Contrary, the dynamic ARMA$(q_1,q_2)$ models permit a significant reduction of the MSE, especially for short-term forecasting in the range of 1 to $50$, 1 to $10^2$,  or 1 to $10^3$ minutes, depending on the values of the $(q_1,q_2)$ parameters. Naturally, as the prediction horizon increases the added benefit of knowing the power output values from the past is getting less and less valuable. Furthermore, the least effective is the moving-average MA($q_2$) model structure, since it utilizes information only from the past $q_2$ samples. So if the prediction horizon reaches this limit, no past information is used. As the unconditional expectation of the zero mean Gaussian process is zero, the expectation of the rescaled residuals will also be zero. This can be seen in Figure~\ref{fig:predictionCapabilitesWithHorizons}, in which the MSE value of the MA$(5)$ model becomes equal to that of the corresponding static WTPC model for prediction horizons $h > 5\cdot\delta = 50$ minutes. While, for the same value of the forecasting horizon, the ARMA$(5,5)$ model has a reduced MSE value by  17\% compared to the corresponding static WTPC model.
In contrast to the short-term forecasting characteristics, the advantages of the dynamic models versus the static models disappear in the long-term. Thus, the dynamic WTPC model with the ARMA layer can be used for both short-term and long-term forecasting, as for short horizons it outperforms the static WTPC, while for longer horizons has the same performance as the static counterpart model.
This result is evident in light of Figure~\ref{fig:predictionCapabilitesWithHorizons} as the various MSE values of the dynamic models converge to the MSE value of the corresponding static WTPC model.

We need to note that since the autocorrelation of the rescaled residuals (calculated using the enhanced WTPC model), cf. Figure~\ref{fig:RescaledResidualAutocorrelation}, is significant for a long period of time (more than 400 minutes), estimating higher order ARMA models would reduce the MSE value in comparison to the corresponding static model but this effect will vanish for time horizons longer than the autocorrelation length of the rescaled residuals.

Comparing Figures~\ref{fig:PredictionCapabilitesWithHorizonsValidationSpline} and \ref{fig:PredictionCapabilitesWithHorizonsValidationPhys}, we note that enhancing the static model with the wind direction and the ambient temperature has two effects: the MSE drops significantly, but the added benefit of the dynamic layer vanishes faster ($10^3$ minutes instead of $10^4$). This is due to the fact that the estimated ARMA model in the case of the simple WTPC was additionally trying to capture the autocorrelation structure of the temperature, which is persistent for lengthy lags. While, in the enhanced model, the temperature is provided as a regressor and therefore the ARMA model only needs to capture the remaining residual, whose autocorrelation vanishes for lower lag lengths in comparison to the temperature.

\subsection{Forecasting confidence}
\label{sec:forecasting_confidence}
In this section, we are interested in investigating the performance of the dynamic model in terms of its forecasting ability. To this purpose, we visualize a power output trajectory in Figure~\ref{fig:powerPredictionTimeline} and depict the difference between the prediction and the actual measurements in Figure~\ref{fig:powerPredictionTimelineDifference}. In both figures, we define time 0 to be the starting point of the forecasting horizon and we assume that the wind speed, temperature and relative angle values are known also during the forecasting period, while the power output values are known only till time 0.
For the creation of the figures, we consider both the static and the dynamic WTPC model and plot their predictions together with the corresponding prediction intervals. The prediction interval of the static WTPC model has a constant width, while the dynamic model has a varying width depending on the value of the wind speed.  The 95\% confidence band for predicted power production trajectory $\hat{p}_{k|0}$ was calculated based on the ARMA model and the wind dependent scaling factor.

For the static model, the fixed width interval is a result of calculating the variance of the residuals using the static WTPC model with the $p-\hat{p}=p-\mathcal{F}_{\hat{\bm{\theta}}^{(f)}}(w, \phi, T)$, cf. \eqref{eq:AdditionalPhysicalLayers}, as two times the standard deviation of the residuals can be used as an approximation for the 95\% confidence region of the prediction. This calculation on our data results in a standard deviation for the residuals equal to $12.4925$. However, for this to hold it should be the case that the residuals are normally distributed and independent of the wind speed, but as we have already shown  this is not the case.
Note that the variance of the residuals for the static model is calculated using observations covering the support of the wind speed values $[0,25]$, thus simultaneously taking into account the part for which the static model is very accurate, $[0,g_\ell)\cup(g_u,25]$, and the part in which it is highly inaccurate, $[g_\ell,g_u]$. As a result, the estimate for the variance of the residuals is overly conservative in $[0,g_\ell)\cup(g_u,25]$, while it seems to underestimate the variance in $[g_\ell,g_u]$. This is clearly visible in Figure~\ref{fig:powerPredictionTimelineDifference} as during the first half of the forecasting horizon the wind speed was in $[0,g_\ell)\cup(g_u,25]$, while in the second half it is $[g_\ell,g_u]$.

\begin{figure}[h!]
  \begin{subfigure}[b]{0.9\linewidth}
	\includegraphics{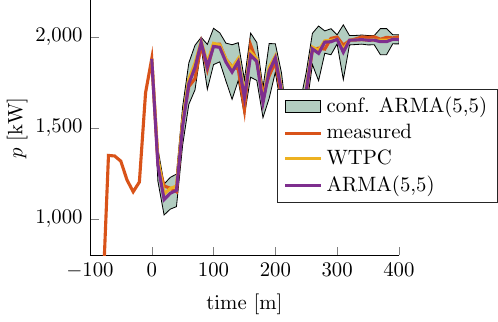}
	\caption{Typical prediction trajectory with the corresponding 95\% confidence bound}
	\label{fig:powerPredictionTimeline}
	\end{subfigure}\\
\begin{subfigure}[b]{0.9\linewidth}
	\includegraphics{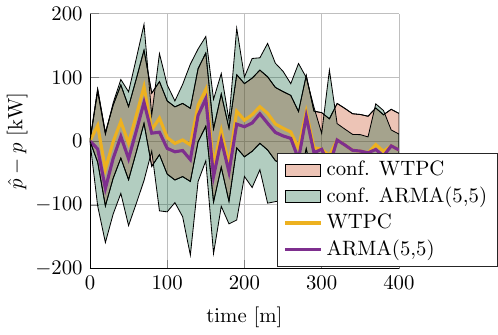}
	\caption{The difference between the prediction and the actual measured power}
	\label{fig:powerPredictionTimelineDifference}
    \end{subfigure}
    \caption{Forecasting capabilities of the static (WTPC) model and the dynamic model with the ARMA layer}
\end{figure}

For the dynamic model, cf. \eqref{eq:DynamicPowerModelDefinition}, the variance of the power estimate $\hat{p}_{k|k-\lceil h/\delta \rceil}$ can be calculated by considering the unknown random variables of $(\varepsilon_\ell)_{k-\lceil h/\delta \rceil < \ell \leq k}$, which drive the rescaled residual stochastic process $r'_{k|k-\lceil h/\delta \rceil}$, see \eqref{eq:ARMAform}.
We know that the variance of $r'_{k|k-\lceil h/\delta \rceil}$ is monotonically increasing with the prediction distance $\lceil h/\delta \rceil$ and it has a bounded limit, since the ARMA model is stable
(every solution $z \in \mathds{C}$ of the $1-\sum_{i=1}^{q_1} a_i z^{-1}=0$ characteristic equation has absolute value less than 1).
If we assume that $\varepsilon_t$ is an i.i.d. Gaussian signal, then the variance of the power predictions can be calculated from the ARMA model and the wind dependent rescaling $\sigma_{w_t}$.
The fact that the dynamic model results in a confidence band with varying width is initially surprising as the width might even shrink in size over time. The explanation for this result is that the proposed dynamic model contains a wind speed dependent scaling. This scaling factor has very small uncertainty when the wind speed is  outside the interval $[g_\ell, g_u]$, i.e., for wind speed values  outside the interval $[g_\ell, g_u]$, the variance is smaller in comparison to the corresponding  value calculated over the full support.
As it can be seen in Figure~\ref{fig:powerPredictionTimeline}, the wind value region, in which the width of the confidence band shrinks corresponds to wind speed values above the wind value $g_u$ (the wind speed is not depicted directly but this can be inferred from the power curve and the shown measured power). In this region of wind values, predictions are more accurate and the  confidence band becomes narrower than the corresponding confidence band of the static  WTPC model. While, for wind values inside the interval $[g_\ell, g_u]$, it can be seen that the confidence band of the dynamic model gets wider and may even contain the confidence interval of the static WTPC model. The increased width of the confidence band is due to the combined effect of predicting values further into the future as well as the changes in the wind speed, that also effect the rescaling factor.

The confidence band of the static WTPC model contains 100\% of the samples. To illustrate how imprecise the uncertainty estimate belonging to the WTPC model is, we can calculate the maximal confidence level for which the empirical confidence is not 100\%. The 41.0701\% confidence band contains 99.9920\% percent of the validation data, which shows a significant underestimation of uncertainty.
The 95\% confidence band corresponding to the dynamic model contains 96.0990\% of the samples in the validation time line. This shows that the uncertainty of the predictions can be evaluated quite reliably using the dynamic model, as the empirical confidence level is relatively close to the theoretical one.

\section{Concluding remarks}
\label{sec:conclusions}
The paper focuses on the short- and long-term power output forecast of a wind turbine based on past measurements of the wind speed, power output and other environmental factors, as well as perfect knowledge of the future wind speed, angle, and ambient temperature. We showed that the parametrization of the WTPC is not a key factor in improving prediction performance. The reason behind this is that for any  model with a sufficiently rich structure (in the case of non-parametric models that implies sufficiently high complexity) we can achieve MSE values close to the lower bound, as long as we have sufficient data to estimate all the unknown parameters at hand. This is not a problem in our case, as we have a trove of data, making it more important to consider models that have a rich enough structure and the unknown parameters can still be estimated with high numerical accuracy, e.g. the polynomial based models suffer from numerical instability issues. Given the available data we have at our disposal, our conclusion is that the B-spline model with a sufficiently high number of knots provides a good modeling choice, as it can capture every detail of a WTPC and it can be estimated in a numerically stable way. Of course, if the data were sparse but the wind speed still covered the range of [\SI[per-mode=symbol]{3.5}{\meter\per\second}, \SI[per-mode=symbol]{15}{\meter\per\second}], a better option would be a logistic model as such models maintain the shape of the WTPC.

The error between the actual power generation  values and those predicted by the WTPC have special characteristics that open up possibilities for better modeling. Below the cut-in speed and above the rated speed of the turbine the predictions are quite accurate. However, in the middle range of the wind speed this is no longer true. We have shown that, on the dataset at hand, a proper rescaling of these residuals can transform the residual signal into a Gaussian signal. Modeling this Gaussian rescaled residual as a stochastic time series allowed us to improve significantly the predictions. The proposed model structure was able to improve up to 40\% in predicting the power output  of the wind turbine on short-term predictions, while the long-term prediction capabilities of the model are identical to that of the WTPC.

\section*{Acknowledgments}
\noindent 
The authors acknowledge the Daisy4offshore consortium for the provision of the data. 
The work of S\'andor Kolumb\'an and Stella Kapodistria is supported by NWO through the Gravitation-grant NETWORKS-024.002.003. S\'andor Kolumb\'an also received funding from the Institute for Complex Molecular Systems (ICMS) in Eindhoven.  Furthermore, the research of Stella Kapodistria was partly done in the framework of the TKI-WoZ: Daisy4offshore project.

\bibliographystyle{apa}
\bibliography{paper01}

\end{document}